\documentclass[11pt,reqno]{amsart}

\usepackage[T1]{fontenc}
\usepackage{amsmath,amssymb,mathtools}
\usepackage{microtype}
\usepackage[margin=1.08in]{geometry}
\usepackage{enumitem}
\usepackage{xcolor}
\usepackage{hyperref}
\usepackage[capitalise,noabbrev]{cleveref}

\hypersetup{
  colorlinks=true,
  linkcolor=blue!45!black,
  citecolor=green!35!black,
  urlcolor=blue!55!black,
  pdftitle={Critical Points of Line Restrictions of Signed Point-Charge Potentials}
}

\newtheorem{theorem}{Theorem}[section]
\newtheorem{proposition}[theorem]{Proposition}
\newtheorem{lemma}[theorem]{Lemma}
\newtheorem{corollary}[theorem]{Corollary}

\newtheorem{remark}[theorem]{Remark}
\newtheorem{example}[theorem]{Example}

\newcommand{\RP}{\mathbb{RP}}
\newcommand{\R}{\mathbb{R}}
\newcommand{\Sph}{\mathbb{S}}
\newcommand{\cU}{\mathcal{U}}
\newcommand{\cLS}{\mathcal{LS}}
\newcommand{\ind}{\operatorname{ind}}
\newcommand{\wind}{\operatorname{wind}}
\newcommand{\Eq}{\operatorname{Eq}}
\newcommand{\dist}{\operatorname{dist}}
\newcommand{\sgn}{\operatorname{sgn}}

\title{Critical Points of Line Restrictions of Signed Point-Charge Potentials}
\author{Xiuqing Duan}
\address{School of Physical and Mathematical Sciences, Nanyang Technological University, Singapore}
\email{xiuqing.duan@ntu.edu.sg}
\date{\today}

\subjclass[2020]{31B05, 26C10, 33C45, 53A20}
\keywords{Maxwell line conjecture, signed point charges, Haar spaces,
Gegenbauer polynomials, locally convex curves, Brouwer degree}

\begin{document}

\begin{abstract}
Let \(\alpha>0\) and restrict a finite inverse-power potential with
arbitrary real coefficients to a line.  Combine terms having the same
projected centre and squared height \((a,b^2)\), delete classes whose
coefficient sum is zero, and let \(m\) be the number of remaining classes.
We prove the following dichotomy.  If \(m=0\), exact cancellation occurs if
and only if every class sum vanishes, and every ordinary point of the
original domain is critical.  If \(m\geq1\), there are at most \(2m-1\)
critical points: when all effective heights are positive the zeros are
counted with analytic multiplicity, while in the presence of effective
sources on the line the assertion is one global distinct-point bound.  
This proves the signed line conjecture of
Gabrielov--Novikov--Shapiro, valid throughout their range and in fact for
every \(\alpha>0\).  The structural input is a projective paired Haar
theorem: for finite \(\beta>1\), the full \(2m\)-dimensional space
\(\sum L_j/Q_j^\beta\), with pairwise nonproportional positive-definite
binary quadratics and arbitrary real linear numerators, has at most
\(2m-1\) projective zeros counted with multiplicity.  For positive charges,
the substitution \(p=2\alpha\) proves Conjecture~3 of
Edelsbrunner--Fillmore--Oliveira throughout its stated range \(p\geq1\) and extends the same conclusion to every \(p>0\).
For every \(n\), an
explicit positive \(n\)-charge configuration attains \(2n-1\) simple critical
points.
\end{abstract}

\maketitle

\section{Introduction}

On a prescribed line, a signed inverse-power
restriction has the form
\begin{equation}\label{eq:intro-signed}
  F_\alpha(t)=\sum_{i=1}^{\ell}
  \eta_i\bigl((t-a_i)^2+b_i^2\bigr)^{-\alpha},
  \qquad \alpha>0,
\end{equation}
where \(\eta_i\in\R\), \(a_i\in\R\), \(b_i\geq0\), and \(\ell\geq1\).
A point \(t \in \R\) is ordinary if \(t \neq a_i\) for every \(i\) with \(b_i=0\).
The question is to bound the zeros of \(F_\alpha'\) on the ordinary domain.
This is a tangential critical-point problem, not the unrestricted
electrostatic equilibrium problem: a critical point of a line restriction
need not be a critical point of the ambient potential.

The first issue is algebraic rather than analytic.  
Terms arising from distinct charge locations can induce the same restricted kernel.
Such terms must be combined, and a
class whose coefficient sum vanishes must be removed.  We make this
reduction canonical below and distinguish the original domain from the possibly
larger reduced domain obtained by filling canceled punctures.
The resulting number \(m\) of nonzero effective kernels, rather than the 
number \(\ell\) of terms in the original representation, is the natural parameter in the theorem.

Gabrielov, Novikov, and Shapiro formulated their line-restriction conjecture
for arbitrary real coefficients and \(\alpha\geq\tfrac12\); it is
Conjecture~1.9 in \cite{GabrielovNovikovShapiro2007}.  
Its literal wording permits 
restrictions that vanish identically by exact cancellation, so any finite bound 
must treat that case separately from the \(m\geq1\) case.
Edelsbrunner, Fillmore, and Oliveira (EFO) later stated the
positive-charge specialization for \(p\geq1\) as their Conjecture~3
\cite{EdelsbrunnerFillmoreOliveira2026} (where they attribute this conjecture to Gabrielov, Novikov, and Shapiro).
Their coefficients are positive,
and their exponent is related to ours by \(p=2\alpha\); the EFO conjecture
is therefore a positive specialization, not an equivalent signed
formulation.  We prove the corrected signed statement for every
\(\alpha>0\), and hence the positive statement for every \(p>0\).

The structural theorem is a projective Haar bound for a paired family.  
A line derivative determines one linear numerator over each quadratic denominator; it belongs to 
the larger paired space obtained by allowing an arbitrary real linear numerator over each denominator.
We prove that the full \(2m\)-dimensional space has
the \(2m-1\) zero bound on \(\RP^1\), with multiplicity.  This is a global
Haar/extended-Chebyshev assertion for the full space;
it does not assert that every ordering of the generators 
forms an extended complete Chebyshev system (ECT-system).

Our terminology follows the classical theory of Tchebycheff and Haar spaces
\cite{KarlinStudden1966}.  Several adjacent results do not directly imply
the paired theorem.  Strict positive definiteness of inverse-multiquadric
interpolation matrices \cite{Micchelli1986} is weaker than total positivity
and does not control zeros at arbitrary test nodes; this distinction is
made explicit in \cite{KarvonenKanagawaSarkka2019}, with classical
total-positivity background in \cite{Karlin1968}.  Prescribed-pole
Cauchy--Vandermonde ECT systems cover the rational endpoint \(\beta=1\), but not the
paired family for \(\beta>1\) needed here \cite{Muhlbach2000}; known total-positivity
results for a Cauchy kernel concern a different one-parameter kernel
\cite{Simon2014}.  Positive mixtures of two Cauchy densities form a
previously classified special case \cite{Dosla2009}.  To our knowledge, the
global projective Haar bound for the full variable-quadratic paired space has
not appeared before.  The proof is a new application of convex-curve methods
of M.~Z. Shapiro and Anisov after a family-specific Wronskian and
rational-normal endpoint analysis.  For comparison, the compatible fixed-Spin
convention recorded by Saldanha--B. Shapiro and Alves is also documented;
this is not a claim of a new general theory of Chebyshev systems.

The proof has two structural steps.  Gegenbauer generating functions first
identify the paired Wronskian with a gradient-evaluation determinant.  A
vanishing determinant would produce too many critical points of a
generalized-axisymmetric harmonic polynomial; a planar index theorem rules
this out.  An exponent homotopy then connects the locally convex paired
curve to the rational normal curve.  
Anisov's closed-projective
loss-of-convexity theorem supplies the global step; 
the fixed-Spin
component formulation recorded by Saldanha--B. Shapiro and Alves gives a
compatible but non-load-bearing interval description
\cite{Shapiro1994,Anisov1998,SaldanhaShapiro2012,Alves2016}.  Smooth signed
restrictions follow directly.  Effective sources on the line are treated by
one globally coordinated two-sided smoothing, not by adding componentwise
estimates.

\section{Reduction and main results}
\label{sec:main-results}

Fix \(\alpha>0\), \(\ell\geq1\), and the data in
\eqref{eq:intro-signed}.  The original regular domain is
\begin{equation}\label{eq:original-domain}
  D_{\rm orig}=\R\setminus\{a_i:b_i=0\}.
\end{equation}
Define
\begin{equation}\label{eq:kernel-equivalence}
  i\sim j
  \quad\Longleftrightarrow\quad
  (a_i,b_i^2)=(a_j,b_j^2).
\end{equation}
Because the heights \(b_i\) and \(b_j\) are chosen nonnegative, this is equivalent to equality
of \((a_i,b_i)\), but \((a_i,b_i^2)\) is the invariant visible in the
restricted kernel.  
Let \(\mathfrak C\) be the set of equivalence classes.
For each \(C\in\mathfrak C\), let \((a_C,b_C^2)\) denote the common value of \((a_i,b_i^2)\) for \(i\in C\), and define
\begin{equation}\label{eq:class-sum}
  \widehat\eta_C=\sum_{i\in C}\eta_i.
\end{equation}
Let \(\mathfrak C_*=\{C\in\mathfrak C:\widehat\eta_C\neq0\}\), and put
\(m=|\mathfrak C_*|\).  If \(m\geq1\), reindex the active classes by
\(j=1,\ldots,m\), and denote their nonzero coefficients and distinct
parameters by \(c_j\) and \((a_j,b_j^2)\).  If \(m=0\), the following sum
is understood to be empty.  Define the reduced extension
\begin{equation}\label{eq:reduced-restriction}
  F_{\rm red}(t)=\sum_{j=1}^{m}
  c_j\bigl((t-a_j)^2+b_j^2\bigr)^{-\alpha}
\end{equation}
on
\begin{equation}\label{eq:reduced-domain}
  D_{\rm red}=\R\setminus\{a_j:b_j=0,\ 1\leq j\leq m\}.
\end{equation}
Then
\begin{equation}\label{eq:domain-inclusion}
  D_{\rm orig}\subseteq D_{\rm red},
  \qquad F_\alpha=F_{\rm red}\quad\hbox{on }D_{\rm orig}.
\end{equation}
Thus reduction fills only punctures produced by classes that cancel
exactly.

\begin{theorem}[Reduced signed Maxwell-line dichotomy]
\label{thm:signed-line}
With the preceding notation,
\begin{equation}\label{eq:exact-cancellation-equivalences}
  m=0
  \quad\Longleftrightarrow\quad
  \widehat\eta_C=0\ \hbox{for every }C\in\mathfrak C
  \quad\Longleftrightarrow\quad
  F_\alpha\equiv0\ \hbox{on }D_{\rm orig}.
\end{equation}
Moreover, exactly one of the following alternatives holds.
\begin{enumerate}[label=\textup{(\roman*)}]
\item If \(m=0\), then \(F_{\rm red}\) is the zero function, and every point
of \(D_{\rm orig}\) is critical.
\item If \(m\geq1\) and every effective height is positive, then
\(D_{\rm red}=\R\) and
\begin{equation}\label{eq:smooth-signed-bound}
  \sum_{t\in\R:\,F_{\rm red}'(t)=0}
  \operatorname{ord}_t(F_{\rm red}')\leq2m-1.
\end{equation}
\item If \(m\geq1\) and at least one effective height is zero, then the
single global distinct-point estimate
\begin{equation}\label{eq:singular-signed-bound}
  \sum_{I\in\pi_0(D_{\rm red})}
  \#\{t\in I:F_{\rm red}'(t)=0\}\leq2m-1
\end{equation}
holds. Here \(\pi_0(D_{\rm red})\) denotes the set of connected components of \(D_{\rm red}\). 
No multiplicity assertion is made in this singular case.
\end{enumerate}
In the two nonzero alternatives, \(m\leq\ell\), and hence
\begin{equation}\label{eq:presentation-bound}
  2m-1\leq2\ell-1.
\end{equation}
The critical points on \(D_{\rm orig}\) are a subset of those counted on
\(D_{\rm red}\), so the applicable bound also holds on the original
domain.
\end{theorem}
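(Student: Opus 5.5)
The plan is to dispose of the algebraic reduction first, then derive alternatives (ii) and (iii) from the projective paired Haar theorem advertised in the introduction.

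\textbf{Exact cancellation.} For \eqref{eq:exact-cancellation-equivalences}, the first equivalence is the definition of \(\mathfrak C_*\). For the second, \(m=0\) gives \(F_\alpha=F_{\rm red}\equiv0\) on \(D_{\rm orig}\) by \eqref{eq:domain-inclusion}. Conversely, suppose \(m\ge1\). I will show that \(F_{\rm red}\) is not identically zero on the open dense set \(D_{\rm orig}\) by proving that the kernels \(((t-a_j)^2+b_j^2)^{-\alpha}\) with distinct \((a_j,b_j^2)\) are linearly independent.

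To do this, continue analytically to \(t\in\mathbb C\). If some \(b_j>0\), pick the one with maximal \(b_j\) and smallest \(a_j\) among ties. Its branch point \(t=a_j+ib_j\) is a genuine singularity of that kernel, since \(\alpha>0\), and it is not a singularity of any other kernel. If all \(b_j=0\), the poles or branch points at distinct real \(a_j\) separate the kernels in the same way. Hence \(F_\alpha\not\equiv0\).

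Alternative (i) is then immediate. The dichotomy is exhaustive by construction, and \(m\le\ell\) holds because active classes are disjoint nonempty subsets of \(\{1,\dots,\ell\}\).

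\textbf{Smooth case (ii).} Write \(F_{\rm red}'(t)=-2\alpha\sum_j c_j(t-a_j)Q_j(t)^{-\alpha-1}\) with \(Q_j(t)=(t-a_j)^2+b_j^2\). Homogenize via \(t=x/y\): each \(Q_j\) becomes a positive-definite binary quadratic, and these are pairwise nonproportional because the \((a_j,b_j^2)\) are distinct. The derivative then lies in the paired space \(\sum L_j/Q_j^{\beta}\) with \(\beta=\alpha+1>1\), up to a nonvanishing power of \(y\) on the affine chart. The function is nonzero because \(F_{\rm red}\) is a nonconstant nonzero function: it decays at infinity and is not identically zero.

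The paired Haar theorem then bounds the projective zeros with multiplicity by \(2m-1\). Affine zeros are a subset, with the same multiplicity, which gives \eqref{eq:smooth-signed-bound}.

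The main input is that paired Haar theorem itself, which I take from later in the paper. Its proof runs through the Wronskian/Gegenbauer nondegeneracy of the paired curve, followed by the exponent homotopy to the rational normal curve and Anisov's theorem.

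\textbf{Singular case (iii).} This is the step I expect to be the main obstacle. Suppose the sources with \(b_j=0\) sit at points \(a_j\). Replace each such \(b_j\) by a small \(\varepsilon>0\) to obtain \(F_\varepsilon\), which falls under case (ii) with the same \(m\), provided \(\varepsilon\) is chosen so that no two parameter pairs collide.

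Now fix finitely many distinct critical points of \(F_{\rm red}\) in \(D_{\rm red}\); finiteness of the count is part of what the argument establishes. The goal is to show each persists as a zero of \(F_\varepsilon'\) for small \(\varepsilon\). Nondegenerate zeros persist by Hurwitz's theorem on compact subsets avoiding the punctures. Degenerate ones need more care, since Hurwitz persists zeros with total multiplicity \(\ge1\) but a zero of even order may split into complex pairs. Hurwitz still preserves multiplicity counted in complex neighbourhoods, however. So I would instead apply a complex-multiplicity version: count real zeros only, each contributing at least one. The distinct-point bound is \(\le\) the multiplicity sum for \(F_\varepsilon'\), provided every real zero of \(F_{\rm red}'\) yields at least one real zero of \(F_\varepsilon'\).

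Odd-order zeros guarantee this through a sign change. For even-order zeros, I would perturb \(\varepsilon\) differently on the two sides of each source: a "two-sided smoothing" that also adds a tiny multiple of a kernel to force a sign change, while staying inside the \(2m\)-dimensional paired space. Alternatively, I would note that near each puncture \(F_{\rm red}\to\pm\infty\) with a fixed sign. The extra zeros \(F_\varepsilon'\) acquires near \(a_j\) (one extremum per smoothed source) then only strengthen the bound after subtracting them. The subtraction bookkeeping works as follows: each smoothed source of nonzero charge creates at least one critical point of \(F_\varepsilon'\) inside a shrinking neighbourhood of \(a_j\). These must be accounted for against the missing counts, which is why the estimate is global rather than componentwise.
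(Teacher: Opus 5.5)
Your handling of the cancellation equivalences and of alternative (ii) is sound. The branch-point argument is a valid alternative to the paper's route: a kernel with $b_j>0$ is the only one singular at $a_j+ib_j$, where it blows up like $|t-a_j-ib_j|^{-\alpha}$. The paper instead uses pole asymptotics for zero-height classes and the dimension part of the paired Haar theorem, so your version makes that step independent of the Haar input.

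\textbf{The gap is in (iii), at zeros of even multiplicity.} Your plain smoothing $b_j=0\mapsto\varepsilon$ changes $E_0=-F_{\rm red}'/(2\alpha)$ on compact subsets of $D_{\rm red}$ only at order $\varepsilon^2$. The change is $-\beta\varepsilon^2\sum_{b_j=0}c_j\sgn(t-a_j)|t-a_j|^{-2\beta-1}$, which has no controlled sign, so an even root may simply disappear into a complex pair. None of your proposed remedies closes this:
\begin{enumerate}
\item A smoothing that is ``different on the two sides of each source'' is not one function in one paired space. The $2m-1$ bound applies only to a single nonzero section.
\item Adding a small one-signed term, e.g.\ $\tau((t-a_j)^2+b_j^2)^{-\beta}$, does stay in the paired space, since numerators are arbitrary. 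But it creates real zeros only at even roots whose orientation is opposite to $\sgn\tau$, and it removes the others. So your claim that every real zero of $F_{\rm red}'$ yields at least one real zero of the perturbed field fails whenever even roots of both orientations occur.
\item The at least $k$ new zeros near the $k$ smoothed poles do not correspond to the vanished even roots. They give only $O+k\leq 2m-1$ for the $O$ odd roots and say nothing about the even ones. There is nothing to ``account against the missing counts''.
\end{enumerate}

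\textbf{The missing idea is a first-order perturbation of one definite sign on all of $D_{\rm red}$, combined with a majority count.} The paper moves one on-line pole sideways by $\sgn(c_*)\delta$ and gives every zero-height class height $|\delta|$. Since $\partial_aG_\beta(a,0;t)=(2\beta-1)|t-a|^{-2\beta}$ while heights enter only through $b^2=\delta^2$, this gives $E_\delta=E_0+\delta h+O(\delta^2)$ in $C^r$ on compacts, with $h=(2\beta-1)|c_*|\,|t-a_*|^{-2\beta}>0$. Suppose $E_0$ had $2m$ distinct zeros: $O$ of odd order and $A$, $B$ of even order with the two orientations. For the better sign of $\delta$, every odd root persists and every even root of the favored orientation splits into two real zeros. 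One smooth field then has at least $O+2\max(A,B)\geq O+A+B=2m$ zeros, contradicting the smooth bound. The doubling of the majority type is what pays for the lost minority type.

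Your one-signed added term could play the role of $\delta h$, provided you apply the paired Haar theorem directly (the result is no longer a line derivative) and use this count. You should also record that $E_0$ does not vanish identically on any component of $D_{\rm red}$. Each component abuts an effective pole, near which $|E_0|\to\infty$; without this, the selected zeros need not be isolated or of finite order.
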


\begin{remark}[Scope of the dichotomy]\label{rem:signed-scope}
No sign, total-charge, moment, genericity, or off-line assumption occurs in
\cref{thm:signed-line}.  It includes mixed signs, zero total charge,
vanishing lower moments, degenerate smooth critical points, sources on the
restricting line, and every finite \(\alpha>0\).  
An effective pole--that is, a point \(t=a_j\) with \(b_j=0\) in the reduced data--lies outside both \(D_{\rm orig}\) and \(D_{\rm red}\) and is never counted as critical.
\end{remark}

A real binary quadratic form \(Q\) is \emph{positive definite} when
\(Q(u,v)>0\) for every \((u,v)\neq(0,0)\).  For pairwise
nonproportional such forms \(Q_1,\ldots,Q_n\) and \(\beta>0\), put
\begin{equation}\label{eq:paired-space}
  \cU_\beta(Q_1,\ldots,Q_n)
  =\left\{
    \sum_{i=1}^n\frac{L_i(u,v)}{Q_i(u,v)^\beta}:
    L_i\in\R[u,v]_1
  \right\}.
\end{equation}
Here \(\mathbb R[u,v]_1\) denotes the two-dimensional space of homogeneous real linear forms \(Au+Bv\).
On the unit circle, these expressions are smooth and anti-periodic, and
therefore define sections of the tautological real line bundle on
\(\RP^1\).  Their zero sets and local orders descend to \(\RP^1\); a zero
at infinity is measured in the complementary affine chart.

\begin{theorem}[Paired Haar theorem]\label{thm:paired-haar}
Let \(\beta>1\) be finite, and let \(Q_1,\ldots,Q_n\) be pairwise
nonproportional positive-definite real binary quadratic forms.  Then
\(\cU_\beta(Q_1,\ldots,Q_n)\) has dimension \(2n\), and every nonzero
member has at most \(2n-1\) zeros on \(\RP^1\), counted with multiplicity.
\end{theorem}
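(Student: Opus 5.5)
The plan follows the two structural steps announced in the introduction: first show that the Wronskian of the $2n$ generators never vanishes, then count zeros by a homotopy in the exponent.

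\textbf{Dimension.} Work in the affine chart $v=1$, writing $q_i(t)=Q_i(t,1)$. Suppose $\sum_i (A_it+B_i)\,q_i^{-\beta}\equiv 0$. Each $q_i$ has a pair of complex conjugate roots $z_i,\bar z_i$, and because the forms are pairwise nonproportional these root pairs are distinct. Near $z_i$ the $i$-th term has a branch singularity of type $(t-z_i)^{-\beta}$ with coefficient proportional to $A_iz_i+B_i$. No other term is singular there, so this coefficient must vanish. Since $z_i\notin\R$ and $A_i,B_i$ are real, this forces $A_i=B_i=0$. Hence the dimension is $2n$.

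\textbf{Curve and nondegeneracy.} The generators $u\,Q_i^{-\beta}$ and $v\,Q_i^{-\beta}$ define an anti-periodic curve $\gamma_\beta:\Sph^1\to\R^{2n}$, that is, a closed curve in $\RP^{2n-1}$. I would prove that $\gamma_\beta$ is locally convex, meaning its Wronskian $W_\beta(t)$ never vanishes, for every $\beta>1$. The route is through Gegenbauer generating functions. Write $Q_i(t,1)=(t-a_i)^2+b_i^2$. The expansion of $((t-a)^2+b^2)^{-\beta}$ in powers of the displacement has coefficients $C^{(\beta)}_k$ times powers of $\sqrt{a^2+b^2}$. Using this, $W_\beta(t)$ can be identified, up to a nonzero factor, with a determinant evaluating the gradients of a basis of generalized-axisymmetric harmonic polynomials (degree $\le n-1$, dimension parameter tied to $\beta$) at the $n$ points $(a_i-t,b_i)$ in the upper half plane.

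If this determinant vanished, some nonzero such polynomial would have a critical point at all $n$ points. A planar index argument then gives a contradiction: the gradient is a polynomial vector field, critical points in the open half plane have index $\le -1$, and the total winding at infinity together with the symmetry axis bounds their number by $n-1$. I expect this to be the main obstacle. Both the exact identification with the Gegenbauer determinant and the index count, including the role of points on the axis and the requirement $\beta>1$, need care. My first attempt would be to do $n=2$ explicitly to calibrate the normalization.

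\textbf{Homotopy and conclusion.} Fix the $Q_i$ and let $\beta$ vary in $(1,\infty)$. Under a suitable rescaling, as $\beta\to1^+$ (or alternatively as the quadratics merge) the curve $\gamma_\beta$ degenerates to a curve projectively equivalent to the rational normal curve of degree $2n-1$. This is the Cauchy--Vandermonde endpoint: at $\beta=1$ the space consists of rational functions $P/\prod q_i$ with $\deg P\le 2n-1$, which have at most $2n-1$ real projective zeros. Local convexity is preserved throughout the family by the previous step. Anisov's theorem then says that a continuous family of locally convex closed projective curves cannot lose the global convexity property: every hyperplane meets the curve in at most $2n-1$ points counted with multiplicity. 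Since the starting curve is the rational normal curve, every $\gamma_\beta$ is globally convex.

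A nonzero member of $\cU_\beta$ is exactly a hyperplane section of $\gamma_\beta$, so it has at most $2n-1$ zeros on $\RP^1$ counted with multiplicity. The remaining check is that the endpoint limit is genuinely a locally convex limit, so that Anisov's hypotheses apply; this requires verifying the Wronskian at the endpoint.
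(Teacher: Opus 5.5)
\textbf{Overall.} Your architecture is the paper's: nonvanishing of the full paired Wronskian via Gegenbauer generating functions and a planar index bound, then an exponent homotopy from the rational normal curve at $\beta=1$, closed by Anisov's persistence of global convexity.

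\textbf{What differs, and what is fine.} Your dimension argument is correct and genuinely different. If $A_iz_i+B_i\neq0$, the $i$-th term blows up at $z_i$. Every other term has modulus $|q_j|^{-\beta}$, which stays bounded there by nonproportionality. The paper instead reads off independence from the nonvanishing Wronskian. Your argument is more elementary and works for every $\beta>0$. Your worry about the endpoint is unnecessary. $\Gamma_\beta$ is jointly smooth on $[1,\beta_*]$ without any rescaling or limit. $\Gamma_1$ is the positive function $1/\prod_iQ_i$ times an invertible linear image of the Veronese curve, so its jet determinant is nonzero. Drop the ``merging quadratics'' alternative, since merging destroys nonproportionality.

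\textbf{The gap.} It is the local-convexity step you flag, and the identification you state would fail. A gradient determinant at $n$ points is $2n\times2n$, so it needs $2n$ nonconstant basis functions. $L_\lambda$-harmonic polynomials of degree at most $n-1$ supply only $n-1$ of them, one per homogeneous degree.

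\textbf{The missing idea.} Work with $\lambda=\beta-1$, not $\beta$, and differentiate in the coefficients of the normalized quadratic $q_i(s)=1+2X_is+Y_is^2$.
\begin{itemize}
\item Since $\partial_Xq^{-\lambda}=-2\lambda sq^{-\beta}$ and $\partial_Yq^{-\lambda}=-\lambda s^2q^{-\beta}$, the Taylor coefficients of $q_i^{-\beta}$ and $sq_i^{-\beta}$ are, up to signs and constants, $\partial_XK^\lambda_{r+1}$ and $\partial_YK^\lambda_{r+1}$ at $(X_i,Y_i)$.
\item Hence $W_\beta(0)$ is a nonzero multiple, involving $\lambda^{-2n}$, of $\det[\nabla K^\lambda_r(X_i,Y_i)]_{1\le r\le 2n}$.
\item A vanishing determinant therefore yields the polynomial $h(x,\rho)=\sum_{r=1}^{2n}d_rK^\lambda_r(x,x^2+\rho^2)$, of degree $D\le 2n$, with critical points $(X_i,\sqrt{Y_i-X_i^2})$.
\end{itemize}

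\textbf{The points.} These critical points are the inversions $(t-a_i,b_i)/\bigl((t-a_i)^2+b_i^2\bigr)$, not points like your $(a_i-t,b_i)$. At uninverted points the correct functions are the Kelvin transforms $\partial_w^r(w^2+b^2)^{-\lambda}$, $0\le r\le 2n-1$. These are not polynomials, and the index count would then also have to account for the puncture at the origin.

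\textbf{The index count.} With $D\le 2n$, winding $1-D$ at infinity and index $\le -1$ at every critical point give $N_+\le\lfloor (D-1)/2\rfloor\le n-1$. For critical points on the axis you must still prove two things:
\begin{itemize}
\item the first nonzero homogeneous term is a multiple of $Z^\lambda_k$, from the recurrence with positive denominators $2(j+1)(2j+1+2\lambda)$;
\item its gradient winds $1-k$ times, from the $k$ simple zeros of $C^\lambda_k$ in $(-1,1)$.
\end{itemize}
This, together with $C^\lambda_r(1)>0$ and the $\lambda^{-2n}$ prefactor, is exactly where $\beta>1$ is used.
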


This is a global projective Haar/extended-Chebyshev theorem for the full
paired space.  Pairwise nonproportionality is the only distinctness
hypothesis, arbitrary real linear numerators are allowed, and no ordering of
the quadratics is used.  Rank-one quadratics are excluded and will be
handled separately in the singular part of \cref{thm:signed-line}.

We prove \cref{thm:paired-haar} in
\cref{sec:local-wronskian,sec:index,sec:globalization}; the signed theorem is
then proved in \cref{sec:signed-proof}.  Its physical consequences and
examples appear in \cref{sec:physical-consequences,sec:examples}.

\section{Gegenbauer jets and the paired Wronskian}
\label{sec:local-wronskian}

Fix an arbitrary point of \(\RP^1\), choose a local affine coordinate
\(s\) that vanishes there, and use \(1,s\) as a numerator basis.  Dividing
each quadratic by its positive value at \(s=0\) gives
\begin{equation}\label{eq:q-normal}
  q_i(s)=1+2X_i s+Y_i s^2,
  \qquad Y_i>X_i^2.
\end{equation}
The normalized pairs \((X_i,Y_i)\) are distinct: equality of two normalized
quadratics would make the original forms proportional.  The local paired
family is
\begin{equation}\label{eq:local-pairs}
  q_i(s)^{-\beta},\qquad s q_i(s)^{-\beta},
  \qquad i=1,\ldots,n.
\end{equation}

Put \(\lambda=\beta-1>0\).  The Gegenbauer generating function is
\begin{equation}\label{eq:gegenbauer-generating}
  (1-2zt+t^2)^{-\lambda}
  =\sum_{r\geq0}C_r^\lambda(z)t^r;
\end{equation}
see DLMF~18.12.4 \cite{NIST:DLMF}, or \cite[Sec.~4.7]{Szego1975}.
Here \(C_r^\lambda\) denotes the Gegenbauer polynomial of degree \(r\) with parameter \(\lambda\), in the normalization determined by this generating function.
With \(z=X/\sqrt Y\) and \(t=-\sqrt Y\,s\), it gives
\begin{equation}\label{eq:K-generating}
  (1+2Xs+Ys^2)^{-\lambda}
  =\sum_{r\geq0}(-1)^rK_r^\lambda(X,Y)s^r,
\end{equation}
where
\begin{equation}\label{eq:K-definition}
  K_r^\lambda(X,Y)
  =Y^{r/2}C_r^\lambda\!\left(\frac{X}{\sqrt Y}\right).
\end{equation}
Parity of the Gegenbauer polynomial makes \(K_r^\lambda\) a polynomial.
More explicitly,
\begin{equation}\label{eq:K-explicit}
  K_r^\lambda(X,Y)
  =\sum_{j=0}^{\lfloor r/2\rfloor}
    \frac{(-1)^j2^{r-2j}(\lambda)_{r-j}}
         {j!(r-2j)!}
    X^{r-2j}Y^j.
\end{equation}
Here \((\lambda)_0=1\) and
\[
(\lambda)_k=\lambda(\lambda+1)\cdots(\lambda+k-1)
\qquad(k\geq1)
\]
is the rising Pochhammer symbol.

Writing
\[
q=q(X,Y;s)=1+2Xs+Ys^2,
\]
differentiation of the left side of \eqref{eq:K-generating} with respect to \(X\) and \(Y\) gives
\begin{equation}\label{eq:q-parameter-derivatives}
  \partial_Xq^{-\lambda}=-2\lambda s q^{-\beta},
  \qquad
  \partial_Yq^{-\lambda}=-\lambda s^2q^{-\beta}.
\end{equation}
Let \(W_\beta(0)\) be the \(2n\)-by-\(2n\) Wronskian whose rows are
derivative orders \(0,\ldots,2n-1\) and whose columns, in paired order,
are the functions in \eqref{eq:local-pairs}.  Define
\(A\in\R^{2n\times2n}\), with rows ordered
\((1,X),(1,Y),\ldots,(n,X),(n,Y)\) and columns
\(r=1,\ldots,2n\), by
\begin{equation}\label{eq:A-definition}
  A_{(i,X),r}=\partial_XK_r^\lambda(X_i,Y_i),
  \qquad
  A_{(i,Y),r}=\partial_YK_r^\lambda(X_i,Y_i).
\end{equation}

\begin{lemma}[Exact jet determinant]\label{lem:exact-factor}
With the preceding ordering,
\begin{equation}\label{eq:exact-factor}
  \boxed{
  W_\beta(0)=
  \frac{(-1)^n\prod_{r=0}^{2n-1}r!}
       {2^n\lambda^{2n}}\det A.}
\end{equation}
\end{lemma}

\begin{proof}
Write \(q_i(s)^{-\beta}=\sum_{r\geq0}c_{i,r}s^r\), with
\(c_{i,-1}=0\).  Coefficient comparison in
\eqref{eq:q-parameter-derivatives} yields
\[
  c_{i,r}=\frac{(-1)^r}{2\lambda}
    \partial_XK_{r+1}^\lambda(X_i,Y_i),
  \qquad
  c_{i,r-1}=\frac{(-1)^r}{\lambda}
    \partial_YK_{r+1}^\lambda(X_i,Y_i).
\]
Passing from Taylor coefficients to derivatives multiplies row \(r\) by
\(r!\).  The alternating signs contribute
\[
  (-1)^{\sum_{r=0}^{2n-1}r}
  =(-1)^{n(2n-1)}=(-1)^n.
\]
The first column in each pair contributes \((2\lambda)^{-1}\), and the
second contributes \(\lambda^{-1}\).  Transposition does not change the
determinant, and \eqref{eq:exact-factor} follows.
\end{proof}
The factor in \eqref{eq:exact-factor} is nonzero for every \(\lambda>0\).
We next show that \(\det A\) cannot vanish.

\begin{proposition}[A singular Wronskian creates critical points]
\label[proposition]{prop:singular-creates-critical}
If \(W_\beta(0)=0\), there is a nonconstant polynomial \(h(x,\rho)\),
even in \(\rho\), of exact degree \(D\leq2n\), satisfying
\begin{equation}\label{eq:generalized-axisymmetric-pde}
  L_\lambda h
  :=h_{xx}+h_{\rho\rho}+\frac{2\lambda}{\rho}h_\rho=0,
\end{equation}
and having the \(n\) distinct upper-half-plane critical points
\begin{equation}\label{eq:critical-data-points}
  \left(X_i,\sqrt{Y_i-X_i^2}\right),
  \qquad i=1,\ldots,n.
\end{equation}
\end{proposition}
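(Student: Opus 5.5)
By \cref{lem:exact-factor}, \(W_\beta(0)=0\) forces \(\det A=0\). Since \(A^{\mathsf T}\) is also singular, there is a nonzero coefficient vector \(\gamma=(\gamma_1,\ldots,\gamma_{2n})\in\R^{2n}\) with
\[
  \sum_{r=1}^{2n}\gamma_r\,\partial_XK_r^\lambda(X_i,Y_i)=0,
  \qquad
  \sum_{r=1}^{2n}\gamma_r\,\partial_YK_r^\lambda(X_i,Y_i)=0,
  \qquad i=1,\ldots,n.
\]
Set \(P(X,Y)=\sum_{r=1}^{2n}\gamma_rK_r^\lambda(X,Y)\). Then \(\nabla_{(X,Y)}P\) vanishes at each of the \(n\) points \((X_i,Y_i)\), and these points are distinct. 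The plan is to transport \(P\) to the variables \((x,\rho)\) by the substitution
\[
  X=x,\qquad Y=x^2+\rho^2 .
\]
This map is a diffeomorphism from \(\{Y>X^2\}\) onto the upper half-plane \(\rho>0\), with Jacobian determinant \(2\rho\neq0\). Hence \(h(x,\rho):=P(x,x^2+\rho^2)\) has critical points exactly at the images \(\bigl(X_i,\sqrt{Y_i-X_i^2}\bigr)\). These points are distinct, and \(h\) is manifestly even in \(\rho\).

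Next I verify that each \(H_r(x,\rho):=K_r^\lambda(x,x^2+\rho^2)\) satisfies \(L_\lambda H_r=0\). Put \(R=\sqrt{x^2+\rho^2}\) and \(\cos\theta=x/R\). From \eqref{eq:K-definition}, \(H_r=R^rC_r^\lambda(\cos\theta)\). This is the classical zonal solid harmonic in dimension \(2\lambda+2\), written in the meridian coordinates \((x,\rho)\); the operator \(L_\lambda\) is precisely the Laplacian on axisymmetric functions in \(\R^{2\lambda+2}\), or its formal continuation when \(2\lambda\) is not an integer. The cleanest check uses the generating function. In these variables \eqref{eq:K-generating} reads
\[
  G(x,\rho;s)=\bigl((1+sx)^2+s^2\rho^2\bigr)^{-\lambda}.
\]
A direct computation shows \(L_\lambda\,\bigl((x-x_0)^2+\rho^2\bigr)^{-\lambda}=0\); this is the fundamental-type solution with exponent \(-\lambda\). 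Rescaling gives \(G=s^{-2\lambda}\bigl((x+1/s)^2+\rho^2\bigr)^{-\lambda}\), so \(L_\lambda G=0\) identically in \(s\). Comparing coefficients of \(s^r\) then yields \(L_\lambda H_r=0\) for every \(r\), and by linearity \(L_\lambda h=0\).

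It remains to show that \(h\) is nonconstant of exact degree \(D\le 2n\). By \eqref{eq:K-explicit}, \(H_r\) is a homogeneous polynomial of degree \(r\) in \((x,\rho)\). Its \(x^r\) coefficient is \(\sum_j(-1)^j2^{r-2j}(\lambda)_{r-j}/(j!(r-2j)!)\); equivalently \(H_r(1,0)=C_r^\lambda(1)=(2\lambda)_r/r!\neq0\). So each \(H_r\) is a nonzero homogeneous polynomial of degree \(r\), and these polynomials are linearly independent. Because \(\gamma\neq0\) and only indices \(r\ge1\) appear, \(h\) is a nonzero sum of homogeneous pieces of degrees \(1,\ldots,2n\), with no constant term. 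Therefore \(h\) is nonconstant, and its exact degree \(D=\max\{r:\gamma_r\neq0\}\) satisfies \(1\le D\le2n\).

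The main obstacle is the harmonicity verification for noninteger \(2\lambda\), together with regularity along \(\rho=0\). For the first, the generating-function argument avoids needing a genuine higher-dimensional Laplacian; one only differentiates \(\bigl((x-x_0)^2+\rho^2\bigr)^{-\lambda}\) directly. For the second, the singular term \(\tfrac{2\lambda}{\rho}h_\rho\) must be well defined. Since \(h\) is even in \(\rho\), \(h_\rho/\rho\) is a polynomial, so \eqref{eq:generalized-axisymmetric-pde} holds as a polynomial identity. The other points, namely distinctness of the critical points and the nondegeneracy of the change of variables off the axis, are routine.
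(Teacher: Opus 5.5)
Your proof is correct and follows the paper's argument essentially step for step: a null vector of \(A\), the substitution \(h(x,\rho)=H(x,x^2+\rho^2)\), the chain rule, and \(C_r^\lambda(1)=(2\lambda)_r/r!\neq0\) for the degree count. The one real difference is how you verify \(L_\lambda H_r=0\): you apply \(L_\lambda\) to the generating function \(\bigl((1+sx)^2+s^2\rho^2\bigr)^{-\lambda}\), a rescaled translate of the fundamental solution \(\bigl((x-x_0)^2+\rho^2\bigr)^{-\lambda}\), whereas the paper rewrites the Gegenbauer ODE in polar coordinates; your check is valid and more self-contained.

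One wording slip: the change of variables shows that the \(n\) image points are critical points of \(h\), not that they are its only critical points in \(\rho>0\). Only the former is needed.
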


\begin{proof}
If \(W_\beta(0)=0\), then \(A\) is singular.  Hence some nonzero vector
\((d_1,\ldots,d_{2n})\) makes
\begin{equation}\label{eq:H-definition}
  H(X,Y)=\sum_{r=1}^{2n}d_rK_r^\lambda(X,Y)
\end{equation}
satisfy
\begin{equation}\label{eq:H-critical}
  H_X(X_i,Y_i)=H_Y(X_i,Y_i)=0
  \qquad(i=1,\ldots,n).
\end{equation}
Set
\begin{equation}\label{eq:h-substitution}
  h(x,\rho)=H(x,x^2+\rho^2).
\end{equation}
Then
\[
  h=\sum_{r=1}^{2n}d_r Z_r^\lambda,
  \qquad
  Z_r^\lambda(x,\rho)
  =(x^2+\rho^2)^{r/2}
  C_r^\lambda\!\left(\frac{x}{\sqrt{x^2+\rho^2}}\right).
\]
Each \(Z_r^\lambda\) is a nonzero homogeneous polynomial of degree \(r\):
indeed \(C_r^\lambda(1)=(2\lambda)_r/r!>0\).  Thus \(h\) is nonconstant
and has degree at most \(2n\).

The Gegenbauer differential equation
\[
  (1-z^2)(C_r^\lambda)''-(2\lambda+1)z(C_r^\lambda)'
  +r(r+2\lambda)C_r^\lambda=0
\]
is equivalent, after writing \(x=R\cos\theta\) and
\(\rho=R\sin\theta\), to \(L_\lambda Z_r^\lambda=0\); see
DLMF Table~18.8.1 \cite{NIST:DLMF}.  Hence
\eqref{eq:generalized-axisymmetric-pde} holds.  
Finally, with \(H_X\) and \(H_Y\) evaluated at \((X,Y)=(x,x^2+\rho^2)\), the chain rule gives
\[
  h_x=H_X+2xH_Y,
  \qquad h_\rho=2\rho H_Y.
\]
Equations \eqref{eq:H-critical} and \(Y_i>X_i^2\) show that every point listed in
\eqref{eq:critical-data-points} is a critical point of \(h\).  
These points are distinct because the pairs \((X_i,Y_i)\) are distinct.
\end{proof}

\section{Critical indices for generalized-axisymmetric polynomials}
\label{sec:index}

The next theorem is the local obstruction that forces the paired Wronskian
to be nonzero.  The vector-field index is the Brouwer degree of the
normalized vector field on a small positively oriented circle.

\begin{theorem}[Critical-index theorem]\label{thm:critical-index}
Let \(\lambda>0\), and let \(h(x,\rho)\) be a nonconstant real polynomial,
even in \(\rho\), of exact degree \(D\), satisfying \(L_\lambda h=0\).
Every finite critical point of \(h\), including every degenerate one, is
isolated.  If its first nonconstant homogeneous Taylor term has degree
\(k\geq2\), then
\begin{equation}\label{eq:index-local}
  \ind_q(\nabla h)=1-k\leq-1.
\end{equation}
If \(N_+(h)\) is the number of distinct critical points in \(\rho>0\), then
\begin{equation}\label{eq:critical-count}
  N_+(h)\leq\left\lfloor\frac{D-1}{2}\right\rfloor.
\end{equation}
\end{theorem}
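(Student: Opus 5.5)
The plan is to compute every local index from the lowest-order homogeneous part of \(h\) at the critical point. A global Poincar\'e--Hopf count on a large disk, driven by the top-degree homogeneous part of \(h\), then gives \eqref{eq:critical-count}. The key structural fact is that \(L_\lambda\) is homogeneous of degree \(-2\) under the dilations \((x,\rho)\mapsto(tx,t\rho)\), and that it preserves evenness in \(\rho\).

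\textbf{Step 1: local structure at an off-axis point.} Let \(q=(x_0,\rho_0)\) with \(\rho_0>0\), and expand \(h-h(q)=P_k+P_{k+1}+\cdots\) in homogeneous polynomials in \((x-x_0,\rho-\rho_0)\). Near \(q\), the term \(\frac{2\lambda}{\rho}h_\rho\) is analytic and lowers the degree only by one. Hence the lowest-degree part of \(L_\lambda h\) is \(\Delta P_k\), so \(P_k\) is harmonic. Thus \(P_k=\operatorname{Re}(c\,w^k)\) with \(c\neq0\) and \(w=(x-x_0)+i(\rho-\rho_0)\). Its gradient is \(\overline{kc\,w^{k-1}}\), which has modulus \(k|c||w|^{k-1}\) and dominates the remainder \(O(|w|^{k})\). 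This gives two conclusions. First, \(q\) is an isolated critical point. Second, the normalized gradient on a small circle is homotopic to \(\overline{w^{k-1}}/|w|^{k-1}\), so \(\ind_q=1-k\).

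\textbf{Step 2: local structure at an axis point.} Let \(q=(x_0,0)\). Since \(h\) is even in \(\rho\), each Taylor term \(P_k\) is even in \(\rho\). Because \(L_\lambda\) is dilation-homogeneous about any axis point, the lowest part satisfies \(L_\lambda P_k=0\). The homogeneous polynomials of degree \(k\) that are even in \(\rho\) and satisfy \(L_\lambda P=0\) form a one-dimensional space: writing \(P=R^kf(\cos\theta)\) reduces the equation to the Gegenbauer ODE, whose polynomial solution is unique up to scaling. Hence \(P_k=c\,Z_k^\lambda\) with \(c\neq0\).

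It remains to show \(\nabla Z_k^\lambda\neq0\) on the unit circle.
\begin{itemize}
\item In the radial direction, \(\partial_RZ=kR^{k-1}C_k^\lambda\).
\item In the angular direction, the derivative is proportional to \(\sin\theta\,(C_k^\lambda)'\).
\item At \(\theta=0,\pi\), the value \(C_k^\lambda(\pm1)\neq0\) handles the radial component.
\item Elsewhere, the zeros of \(C_k^\lambda\) in \((-1,1)\) are simple, so both components cannot vanish together.
\end{itemize}
Isolation then follows exactly as in Step~1.

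For the index, I would use the Bendixson-type formula for gradient fields: \(\ind=1-\tfrac12(\text{number of sign changes of }P_k\text{ on the circle})\), which holds when the critical point is isolated and \(P_k\) has only simple circle zeros. \(C_k^\lambda\) has \(k\) simple zeros in \((-1,1)\), giving \(2k\) simple zeros of \(Z_k\) on the circle. Hence \(\ind_q=1-k\).

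As a cross-check, I would compute the winding directly. The field is \(R^{k-1}\bigl(kC_k(\cos\theta)\,e_R-\sin\theta\,C_k'(\cos\theta)\,e_\theta\bigr)\). The \(e_\theta\) component alternates in sign between consecutive zeros of \(C_k\) by interlacing, which yields winding \(1-k\) relative to the frame. I expect this index computation on the axis to be the main obstacle, and I would make it rigorous via this interlacing argument.

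\textbf{Step 3: global count.} The critical set \(\{h_x=h_\rho=0\}\) is a real algebraic set all of whose points are isolated, so it is finite. Let \(P_D\) be the top homogeneous part of \(h\). Then \(L_\lambda P_D=0\), since this is the top-degree part of \(L_\lambda h\), and \(P_D\) is even in \(\rho\). By Step~2, \(P_D=c\,Z_D^\lambda\), whose gradient has no zeros on the unit circle. Therefore, on a large circle of radius \(R\), the field \(\nabla h\) is dominated by \(\nabla P_D\), so there are no critical points near infinity and the winding number there is \(1-D\).

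By Poincar\'e--Hopf (the degree additivity of the Brouwer degree), \(\sum_q\ind_q=1-D\). Every term is at most \(-1\) by \eqref{eq:index-local}. Off-axis critical points come in mirror pairs \((x,\pm\rho)\), since \(h\) is even in \(\rho\). Hence
\[
-2N_+\;\geq\;\sum_q\ind_q\;=\;1-D,
\]
which gives \(N_+\leq\lfloor (D-1)/2\rfloor\).
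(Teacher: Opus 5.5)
Your proposal is correct and follows essentially the paper's proof. The skeleton is the same: a harmonic leading term off the axis, a leading term that is a nonzero multiple of \(Z_k^\lambda\) with nonvanishing gradient on the unit circle at axis points, and a global index sum \(1-D\) obtained from \(\nabla Z_D^\lambda\) at infinity, combined with indices \(\leq -1\) and reflection in \(\rho\). The only differences are local: you get one-dimensionality from uniqueness of polynomial solutions of the Gegenbauer equation rather than the coefficient recurrence in the axis expansion, and the axis index from the sign-change (Bendixson) count rather than the paper's regular-value computation. One small correction to your interlacing cross-check: relative to the rotating polar frame the winding is \(-k\), and the frame itself contributes the extra \(+1\) that gives \(1-k\).
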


\begin{proof}
Because \(h\) is even in \(\rho\), its derivative \(h_\rho\) is divisible
by \(\rho\); thus \(L_\lambda h=0\) is a polynomial identity across the
axis, not merely an equation on \(\rho\neq0\).  We treat off-axis and axis
points separately, then apply the planar index sum.

\smallskip
\noindent\emph{Off the axis.}
Let \(q=(x_0,\rho_0)\) be a critical point with \(\rho_0\neq0\), and let
\(A_k\), \(k\geq2\), be the first nonconstant homogeneous Taylor term of
\(h-h(q)\) at \(q\).  The coefficient \(2\lambda/\rho\) is analytic near
\(q\).  The degree-\((k-2)\) part of \(L_\lambda h=0\) is therefore
\begin{equation}\label{eq:off-axis-harmonic}
  \Delta A_k=(\partial_x^2+\partial_\rho^2) A_k=0.
\end{equation}
Thus \(A_k\) is a nonzero homogeneous planar harmonic polynomial.  In
polar coordinates it is a nonzero linear combination of
\(r^k\cos(k\theta)\) and \(r^k\sin(k\theta)\).  Its gradient is nonzero on
the punctured plane and winds \(1-k\) times around the origin.  On a
sufficiently small circle about \(q\), the Taylor remainder gives a
homotopy through nonzero vector fields from \(\nabla h\) to \(\nabla A_k\).
The same leading-term domination holds in a punctured disk.  Hence \(q\) is
isolated and \eqref{eq:index-local} holds.

\smallskip
\noindent\emph{On the axis.}
Translation in \(x\) preserves the equation, so take the critical point to
be \((0,0)\).  Evenness gives an expansion
\begin{equation}\label{eq:axis-expansion}
  h(x,\rho)=\sum_{j\geq0}b_j(x)\rho^{2j}.
\end{equation}
Comparing coefficients of \(\rho^{2j}\) in \(L_\lambda h=0\) yields
\begin{equation}\label{eq:axis-recurrence}
  \boxed{
  b_{j+1}(x)=
  -\frac{b_j''(x)}{2(j+1)(2j+1+2\lambda)}.}
\end{equation}
Every denominator is positive.  Consequently, the space of even
homogeneous polynomial solutions of degree \(k\) is one-dimensional: the
coefficient of \(x^k\) determines successively all coefficients of
\(x^{k-2j}\rho^{2j}\).  The nonzero polynomial \(Z_k^\lambda\) defined in
the proof of \cref{prop:singular-creates-critical} spans this space.  The
first nonconstant homogeneous Taylor term at the axis point is therefore a
nonzero multiple of \(Z_k^\lambda\), for some \(k\geq2\).

For \(\lambda>0\), the Gegenbauer polynomial \(C_k^\lambda\) has exactly
\(k\) simple roots in \((-1,1)\), and
\(C_k^\lambda(\pm1)\neq0\).  This follows from Gegenbauer orthogonality
with positive weight \((1-z^2)^{\lambda-1/2}\) and the standard simple-zero
theorem for orthogonal polynomials; see DLMF Table~18.3.1 and
Section~18.2(vi) \cite{NIST:DLMF}.

On the unit circle, the radial and angular components of
\(\nabla Z_k^\lambda\) are
\begin{equation}\label{eq:radial-angular}
  kC_k^\lambda(\cos\theta),
  \qquad
  -\sin\theta\,(C_k^\lambda)'(\cos\theta).
\end{equation}
With the Cartesian vector \((a,b)\) encoded as \(a+ib\), this becomes
\begin{equation}\label{eq:gradient-complex}
  (Z_k^\lambda)_x+i(Z_k^\lambda)_\rho
  =e^{i\theta}q_k(\theta),
\end{equation}
where
\begin{equation}\label{eq:qk-definition}
  q_k(\theta)=
  kC_k^\lambda(\cos\theta)
  -i\sin\theta\,(C_k^\lambda)'(\cos\theta).
\end{equation}
This function never vanishes: off the axis, simultaneous vanishing would
give a multiple interior root; on the axis, the radial component is
nonzero.

We compute its winding directly.  Let
\(\widehat q_k=q_k/|q_k|:S^1\to S^1\).  The regular value \(+i\) has exactly
\(k\) inverse images.  Indeed, each root \(\xi\) of \(C_k^\lambda\)
produces two points with \(\cos\theta=\xi\), and the imaginary parts of
\(q_k\) at those points have opposite signs, so exactly one maps to
\(+i\).  Write \(q_k=a+ib\).  At such a preimage, \(a=0\), \(b>0\), and
\[
  a'(\theta)=-k\sin\theta\,(C_k^\lambda)'(\cos\theta)=kb.
\]
Therefore
\[
  \frac{d}{d\theta}\arg q_k
  =\frac{ab'-ba'}{a^2+b^2}=-k<0.
\]
Each preimage has local degree \(-1\), so
\begin{equation}\label{eq:qk-winding}
  \wind(q_k)=-k,
  \qquad
  \wind(\nabla Z_k^\lambda)=1-k.
\end{equation}
Leading-term domination on a small circle and punctured disk transfers this
winding to \(\nabla h\), proving isolation and \eqref{eq:index-local} at an
axis point as well.

\smallskip
\noindent\emph{The global index sum.}
Let \(h_D\) be the leading homogeneous part of \(h\).  Homogeneous degrees
separate in \(L_\lambda h=0\), and the one-dimensionality just proved shows
that \(h_D\) is a nonzero multiple of \(Z_D^\lambda\).  Its gradient is
nonzero on the unit circle, and \eqref{eq:qk-winding} gives winding
\(1-D\).  Uniformly for \(\omega\in S^1\),
\begin{equation}\label{eq:infinity-leading}
  R^{1-D}\nabla h(R\omega)=\nabla h_D(\omega)+O(R^{-1}).
\end{equation}
Thus \(\nabla h\) is nonzero outside a sufficiently large disk, and its
winding on the boundary of that disk is \(1-D\).  All finite zeros are
isolated and bounded, hence finite in number.  The Brouwer-degree excision
formula, valid also for isolated degenerate zeros, gives
\begin{equation}\label{eq:index-sum}
  1-D=\sum_{\nabla h(q)=0}\ind_q(\nabla h).
\end{equation}
We use the standard local-degree and index-sum formulation from
\cite[Sec.~6]{Milnor1997}.

Evenness in \(\rho\) reflects every upper-half-plane critical point to a
distinct lower-half-plane critical point.  Every finite critical point,
including any axis point, has index at most \(-1\).  Hence
\[
  1-D\leq-2N_+(h),
\]
which is equivalent to \eqref{eq:critical-count}.
\end{proof}

\begin{corollary}[Wronskian nonvanishing]\label[corollary]{cor:wronskian-nonzero}
For every finite \(\beta>1\), the full paired Wronskian is nonzero at every
point of \(\RP^1\).
\end{corollary}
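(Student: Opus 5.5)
The argument is by contradiction, assembling the three preceding results. Fix a point of \(\RP^1\) and choose the local affine coordinate \(s\) vanishing there, as at the start of \cref{sec:local-wronskian}. The normalized quadratics then take the form \eqref{eq:q-normal}, with distinct pairs \((X_i,Y_i)\) satisfying \(Y_i>X_i^2\).

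A change of numerator basis or of local coordinate multiplies the Wronskian of the paired family by a nonvanishing factor. Indeed, dividing by the positive value \(Q_i\) at the base point rescales columns, and passing from \(u,v\) to the basis \(1,s\) is an invertible linear change within each pair. Nonvanishing of "the full paired Wronskian" at the point is therefore equivalent to \(W_\beta(0)\neq0\) in these coordinates. One minor point needs a remark: the Wronskian of sections of a line bundle is well defined up to a nonzero factor, because multiplying all functions by a common nonvanishing function \(g\) multiplies the Wronskian by \(g^{2n}\).

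Now suppose \(W_\beta(0)=0\). With \(\lambda=\beta-1>0\), \cref{prop:singular-creates-critical} produces a nonconstant polynomial \(h\), even in \(\rho\), with \(L_\lambda h=0\) and exact degree \(D\leq2n\). This \(h\) has the \(n\) distinct critical points \((X_i,\sqrt{Y_i-X_i^2})\) in the open upper half-plane \(\rho>0\), so \(N_+(h)\geq n\).

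On the other hand, \cref{thm:critical-index} applies to \(h\), since its hypotheses are exactly those delivered by the proposition. It gives
\[
N_+(h)\leq\left\lfloor\frac{D-1}{2}\right\rfloor\leq\left\lfloor\frac{2n-1}{2}\right\rfloor=n-1,
\]
which contradicts \(N_+(h)\geq n\). Hence \(W_\beta(0)\neq0\) (equivalently \(\det A\neq0\), by \cref{lem:exact-factor} and the nonvanishing of its prefactor for \(\lambda>0\)). Since the base point was arbitrary, the Wronskian is nonzero everywhere on \(\RP^1\).

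The only step requiring care is the invariance remark: checking that the local normalization used to define \(W_\beta(0)\) gives the same vanishing locus as the projective Wronskian, including at the point at infinity. Since every point, infinity included, is handled by choosing a chart centred there, this reduces to the linear-algebra observation above.
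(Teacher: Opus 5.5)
Your proposal is correct and follows the paper's proof: assume the locally normalized Wronskian vanishes, use \cref{prop:singular-creates-critical} to get $h$ of degree $D\leq 2n$ with $n$ upper-half-plane critical points, and contradict $N_+(h)\leq\lfloor (D-1)/2\rfloor\leq n-1$ from \cref{thm:critical-index}, with every projective point, including infinity, covered by an affine chart centred there. Your extra invariance remark (nonzero column rescalings, invertible change of numerator basis within each pair, reparametrization) spells out what the paper leaves implicit at this point and later makes explicit in the chart formulas \eqref{eq:delta-wronskian-u-chart}--\eqref{eq:delta-wronskian-v-chart}.
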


\begin{proof}
If the Wronskian vanished at the arbitrary point normalized in
\eqref{eq:q-normal}, \cref{prop:singular-creates-critical} would give a
polynomial of degree \(D\leq2n\) with at least \(n\) upper-half-plane
critical points.  By \cref{thm:critical-index},
\[
  n\leq N_+(h)
  \leq\left\lfloor\frac{D-1}{2}\right\rfloor
  \leq n-1,
\]
a contradiction.  The local coordinate was chosen at an arbitrary
projective point.  Choosing the other affine chart gives the same
normalization at the point at infinity, so no projective point is omitted.
\end{proof}

In particular, the \(2n\) local paired functions are linearly independent
for every \(\beta>1\).  This proves the dimension assertion in
\cref{thm:paired-haar}; the global zero bound requires one further step.

\section{From local to global convexity}
\label{sec:globalization}

\subsection{The rational-normal endpoint}

At \(\beta=1\), put \(R=\prod_{i=1}^nQ_i\).  It is positive on
\(\RP^1\), and multiplication by \(R\) identifies the paired span with
\begin{equation}\label{eq:beta-one-space}
  \left\{
    \sum_{i=1}^nL_i(u,v)\prod_{j\neq i}Q_j(u,v):
    L_i\in\R[u,v]_1
  \right\}.
\end{equation}

A closed projective curve in \(\mathbb{RP}^d\) is called globally convex if every projective hyperplane intersects it with total multiplicity at most \(d\).

\begin{proposition}\label[proposition]{prop:rational-normal}
The \(2n\) generators in \eqref{eq:beta-one-space} form a basis of the
homogeneous binary forms of degree \(2n-1\).  Consequently the paired curve
at \(\beta=1\) is projectively equivalent to the rational normal curve and
is globally convex.
\end{proposition}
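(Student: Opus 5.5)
The plan is to show that the $2n$ forms $L\prod_{j\neq i}Q_j$ are linearly independent. They lie in the $2n$-dimensional space $\R[u,v]_{2n-1}$, so independence makes them a basis. Suppose
\[
\sum_{i=1}^n L_i\prod_{j\neq i}Q_j\equiv0
\]
with linear forms $L_i$. Fix an index $i_0$. Every term with $i\neq i_0$ contains the factor $Q_{i_0}$, so $Q_{i_0}$ divides $L_{i_0}\prod_{j\neq i_0}Q_j$.

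Over $\mathbb C$, a positive-definite binary quadratic factors as a product of two complex-conjugate, nonreal linear forms, and these forms are non-proportional. The forms $Q_j$ are pairwise nonproportional, and each is determined up to a positive scalar by its pair of conjugate roots in $\mathbb{CP}^1$. Hence distinct indices have disjoint root sets. It follows that $Q_{i_0}$ is coprime to $\prod_{j\neq i_0}Q_j$ in the UFD $\mathbb C[u,v]$. Therefore $Q_{i_0}$ divides $L_{i_0}$. Since $\deg L_{i_0}=1<2$, this forces $L_{i_0}=0$. As $i_0$ was arbitrary, all $L_i$ vanish, which proves the basis claim.

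Alternatively, a dimension count by partial fractions gives the same conclusion, but the divisibility argument is the most direct.

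For the geometric consequence, the paired curve at $\beta=1$ is the map $[u:v]\mapsto[\,g_1(u,v):\cdots:g_{2n}(u,v)\,]$ into $\RP^{2n-1}$, where $g_k$ runs over the generators $u\,Q_i^{-1}$ and $v\,Q_i^{-1}$. Multiplying all coordinates by the positive function $R$ does not change the projective point. So the curve equals $[u:v]\mapsto[\,G_1:\cdots:G_{2n}\,]$, with the $G_k$ forming a basis of the degree-$(2n-1)$ binary forms. An invertible linear change of coordinates carries this basis to the monomials $u^{2n-1-k}v^k$, so the curve is projectively equivalent to the rational normal curve.

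A projective hyperplane pulls back to a nonzero binary form of degree $2n-1$. Such a form has at most $2n-1$ roots on $\RP^1$ counted with multiplicity, and the intersection multiplicity at a point of the curve equals the vanishing order of the pulled-back form there. The intersection multiplicity is therefore at most $2n-1=d$, which is global convexity.

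The only step needing care is coprimality, that is, making sure that nonproportional definite forms share no complex linear factor. This holds because each such form is determined up to scale by its conjugate root pair.
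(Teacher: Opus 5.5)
Your proof is correct and follows the paper's argument. You reduce the relation modulo each $Q_i$ and use coprimality to get $Q_i\mid L_i$, hence $L_i=0$; you then conclude by the dimension count and read off global convexity from the bound of $2n-1$ real projective zeros for a nonzero degree-$(2n-1)$ binary form. The only cosmetic difference is that you establish coprimality in $\mathbb C[u,v]$ via disjoint conjugate root pairs, whereas the paper uses that positive-definite quadratics are irreducible, hence prime, in $\R[u,v]$, and that nonproportional primes are non-associate.
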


\begin{proof}
A positive-definite real binary quadratic is irreducible over \(\R\), hence
prime in the unique-factorization domain \(\R[u,v]\).  Reduce a linear
relation among the generators modulo \(Q_i\).  Pairwise
nonproportionality makes \(Q_i\) coprime to every \(Q_j\), \(j\neq i\), so
\(Q_i\mid L_i\).  Since \(L_i\) is linear, it follows that \(L_i=0\).
Thus all \(2n\) generators are independent.  The space of homogeneous
binary forms of degree \(2n-1\) also has dimension \(2n\), so the two
spaces coincide.  A nonzero degree-\((2n-1)\) binary form has at most
\(2n-1\) real projective zeros counted with multiplicity, which is the
global convexity of the rational normal curve.
\end{proof}

\subsection{Convexity persistence}

For comparison only, we record the standard fixed-endpoint component
statement.  Neither this theorem nor its
interval-endpoint convention is used in the closed-projective proof below.

\begin{theorem}[Fixed-Spin component theorem]
\label{thm:convex-component}
Let \(d\geq2\) and \(z\in\operatorname{Spin}(d+1)\).  Let
\(\cLS^d(z)\) be the space of positive locally convex curves
\([0,1]\to\Sph^d\) whose initial Frenet frame is \(I\) and whose terminal
lifted Frenet frame is \(z\).  If \(\cLS^d(z)\) contains globally convex
curves, it has exactly two connected components; the globally convex curves
form one of them, and that component is contractible.  If it contains no
globally convex curve, it is connected.
\end{theorem}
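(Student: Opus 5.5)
The plan is to treat \cref{thm:convex-component} as a citation and not to reprove it. In the present paper it is recorded only for comparison and is explicitly not load-bearing. So the plan is to verify that the hypotheses and conventions match those of \cite{SaldanhaShapiro2012,Alves2016}, and to indicate the structure of the argument there. First I would fix conventions. A positive locally convex curve \(\gamma:[0,1]\to\Sph^d\) has a Jacobian frame obtained by Gram--Schmidt on \((\gamma,\gamma',\ldots,\gamma^{(d)})\). This frame lies in \(SO(d+1)\) and lifts uniquely to \(\operatorname{Spin}(d+1)\) once it is normalized to \(I\) at \(t=0\); the terminal value \(z\) is then a homotopy-invariant label for \(\cLS^d(z)\). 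With these conventions the statement is exactly the one in those references, where \(d=2\) is Saldanha--Shapiro and general \(d\) is Alves together with Saldanha--Shapiro.

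For the structure, I would split \(\cLS^d(z)\) into its globally convex and non-convex parts and show, in order:
\begin{enumerate}[label=\textup{(\alph*)}]
\item the convex part is open and closed;
\item the convex part is contractible;
\item the non-convex part is connected.
\end{enumerate}
For (a), I would use the Shapiro--Anisov characterization: a locally convex curve is globally convex if and only if, for every \(t_0\), its frame relative to the frame at \(t_0\) stays in the open top Bruhat cell until time \(t_0+1\) in the closed case. This makes convexity open. Closedness follows because losing convexity along a homotopy forces an intermediate curve whose osculating flag meets a non-generic position, which cannot happen while \(z\) is fixed. This is the content of Anisov's theorem \cite{Anisov1998}. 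For (b), I would use the Bruhat-cell coordinates to give an explicit deformation retraction onto a model arc of the rational normal curve. Convexity implies that the Taylor-type coordinates form a totally positive family, and the totally positive region is convex, so straight-line homotopies in those coordinates stay inside the convex part.

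The main obstacle is (c). The approach I would try first is Saldanha's loop-adding technique. Any non-convex curve can be modified by inserting small closed convex loops until it becomes homotopic to a canonical curve with many loops. Two curves that both carry enough loops are connected by "spreading" loops along the curve. The hard part is showing that a non-convex curve already admits such a loop insertion within its own component, using the fact that some osculating hyperplane meets it too often. If this step resists, I would simply cite \cite{SaldanhaShapiro2012} for \(d=2\) and \cite{Alves2016} for general \(d\). Since the theorem is not used in the closed-projective argument below, any gap here does not affect \cref{thm:paired-haar}.
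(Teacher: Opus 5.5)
Citing the theorem rather than proving it is what the paper does. It records the statement only for comparison, as Fact~1 on p.~15 of \cite{SaldanhaShapiro2012} and Theorem~37 of \cite{Alves2016}, with the interval proof ingredients in M.~Z. Shapiro \cite{Shapiro1994}. Both references present the general-\(d\) fixed-Spin statement as known, so your split ``\(d=2\) is Saldanha--Shapiro, general \(d\) is Alves'' does not match them.

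The structural sketch you attach would not hold up as an argument. In (a) you attribute the open-and-closed property to Anisov and use a closed-curve criterion (relative frame in the top Bruhat cell up to time \(t_0+1\)). The paper stresses that Anisov treats closed projective curves, not fixed-Spin intervals. On \([0,1]\) with prescribed end frames, fixing \(z\) controls only the relative frame between the two endpoints. By itself it does not stop a limit of convex curves from acquiring a hyperplane with excess contact at interior parameters, or with high-order (oscillatory) contact at an endpoint. Ruling this out, under the convention that interval endpoints are not counted, is what Shapiro's Lemma~1, Lemmas~6--7, Theorems~5--6 and Corollary~1 to Theorem~6 supply. Lemma~7 is needed precisely for the boundary-oscillatory cases. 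Your convention check therefore has to fix how endpoints are counted in ``globally convex'', and it never does.

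Step (b) is false as stated: total positivity is not a convex condition in matrix or jet coordinates. For lower unitriangular \(3\times3\) matrices with subdiagonal entries \(a,c\) and corner entry \(b\), the totally positive ones are those with \(a,b,c>0\) and \(ac>b\). The points \((a,b,c)=(0.1,0.009,0.1)\) and \((1,0.9,1)\) both qualify, but their midpoint has \(ac=0.3025<0.4545=b\). Positive factorization parameters do give a cell. Even so, a straight-line homotopy of finitely many coordinates does not contract the infinite-dimensional space \(\cLS^d(z)\) while keeping the terminal frame fixed and preserving convexity, which is a condition on every pair of parameters rather than on one jet.

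Step (c) is left to citation by your own account, and so is the clause that \(\cLS^d(z)\) is connected when it has no convex curve. Your proposal therefore amounts to the same citation the paper makes. The added sketch should be dropped or corrected: as written it contains a false step in (b) and cites the closed-curve theorem where the interval theorem is needed.
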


Here local convexity means that
\(\det(\gamma,\gamma',\ldots,\gamma^{(d)})\) never vanishes, with positive
local convexity denoting the positive sign.  Global convexity means at most
\(d\) intersections, counted with multiplicity, with every ambient
hyperplane; interval endpoints are not counted.  This exact fixed-Spin
formulation is recorded as Fact~1 on page~15 of
Saldanha--B. Shapiro \cite{SaldanhaShapiro2012} and as Theorem~37 of
Alves \cite{Alves2016}; both present it as a known result.  The interval
proof ingredients are in M.~Z. Shapiro: Lemma~1, Lemmas~6--7,
Theorems~5--6, and Corollary~1 to Theorem~6 \cite{Shapiro1994}.  Lemma~7 is
needed for the boundary-oscillatory endpoint cases under the convention that
interval endpoints are not counted.  Anisov treats closed projective curves,
not arbitrary fixed-Spin intervals \cite{Anisov1998}.

For the anti-periodic family below, we instead use the following consequence of Anisov's theorem.

\begin{corollary}[Closed-projective convexity persistence]
\label[corollary]{cor:closed-projective-persistence}
Let \(d\geq2\), and let
\(s\mapsto\bar\gamma_s:S^1\to\RP^d\), \(0\leq s\leq1\), be a continuous
\(C^{d+1}\)-family of smooth embedded closed projective curves such that, for every local lift \(\gamma_s\),
\[
\det(\gamma_s,\gamma_s',\ldots,\gamma_s^{(d)})
\]
is nowhere zero; equivalently, the curves have no flattenings.
If \(\bar\gamma_0\) is globally convex, then every \(\bar\gamma_s\) is globally
convex.
\end{corollary}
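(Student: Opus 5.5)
The plan is to derive \cref{cor:closed-projective-persistence} from Anisov's theorem on closed projective curves \cite{Anisov1998}, combined with a connectedness argument on the parameter interval. Anisov's result, in the form needed, says the following. A closed locally convex curve in \(\RP^d\) (no flattenings) that is globally convex can lose global convexity under a deformation through locally convex closed curves only by passing through a curve with a flattening. Equivalently, global convexity is an open and closed condition within the space of closed projective curves without flattenings. So the proof should reduce to showing that the set
\[
S=\{s\in[0,1]:\bar\gamma_s \text{ is globally convex}\}
\]
is nonempty, open and closed in \([0,1]\). It is nonempty because \(0\in S\) by hypothesis.

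\emph{Closedness.} Suppose \(s_k\to s\) with \(s_k\in S\), but some hyperplane \(P\) meets \(\bar\gamma_s\) with total multiplicity at least \(d+1\). The standard semicontinuity argument is the tool here. A local lift \(\gamma_s\) has nonvanishing osculating determinant, so every intersection multiplicity is at most \(d\). Hence each intersection point of \(\bar\gamma_s\) with \(P\) is a zero of a \(C^{d+1}\) function \(\langle\xi,\gamma_s\rangle\) of finite order. By the \(C^{d+1}\) convergence \(\gamma_{s_k}\to\gamma_s\), each such zero of order \(j\) perturbs to at least \(j\) zeros, counted with multiplicity, of \(\langle\xi,\gamma_{s_k}\rangle\) nearby, in the real sense modulo parity. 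This is the delicate point. Real zeros of even order may disappear under perturbation, so I would instead argue with the osculating-flag characterization. Global convexity of a locally convex closed curve is equivalent to the curve being "disconjugate": no hyperplane meets it with multiplicity exceeding \(d\). By the no-flattening condition this is in turn equivalent to an open condition, namely that for every pair of points the osculating flags are transverse in the appropriate sense. Closedness is then exactly the content of Anisov's theorem, which I would invoke directly rather than reprove.

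\emph{Openness.} Openness follows from compactness. If \(\bar\gamma_s\) is globally convex, a hyperplane meeting it with multiplicity at most \(d\) everywhere stays so under small \(C^{d+1}\) perturbations. This uses a uniform argument over the compact space of hyperplanes \((\RP^d)^*\) times the curve, together with the local nondegeneracy of the determinant, which bounds local multiplicities by \(d\) uniformly.

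The main obstacle is matching hypotheses with Anisov's formulation. Anisov works with closed curves in \(\RP^d\) given as the projectivization of either periodic or anti-periodic lifts, depending on the parity of \(d\), and with a fixed orientation class of the Frenet frame. I would check that the family is continuous in \(C^{d+1}\) and keeps \(\det\neq0\). Under these conditions, the lift class and the sign of the determinant are constant along \(s\), so the entire family lies in a single component of Anisov's space of closed locally convex curves. Anisov's theorem states that globally convex curves form a whole connected component of this space. Since \(\bar\gamma_0\) is globally convex and \(s\mapsto\bar\gamma_s\) is a path in that space, every \(\bar\gamma_s\) lies in the same component and is therefore globally convex. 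This component-level phrasing bypasses the semicontinuity subtleties above, and is the version I would write.
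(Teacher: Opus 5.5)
Your proposal is correct and follows the paper's route: openness of global convexity plus Anisov's loss-of-convexity theorem, glued by connectedness of $[0,1]$. The paper takes the first failure parameter $s_*=\min B$, which exists because openness makes $B$ closed, and applies Anisov's Theorem~6 at $s_*$ to obtain a flattening; only this first-failure form of Anisov's result is needed, not sequential closedness of your set $S$ for arbitrary $s_k\to s$ or a component-level statement.

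The only differences are these. You prove openness yourself by semicontinuity of intersection multiplicities under $C^d$ convergence, where the paper cites Anisov's remark after Theorem~6. Your aside about osculating-flag transversality is unnecessary and unjustified, so you should drop it.
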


\begin{proof}
As Anisov notes immediately after Theorem~6, global convexity is open
in the \(C^d(S^1,\RP^d)\) topology
\cite[discussion following Theorem~6]{Anisov1998}.
Let
\[
  B=\{s\in[0,1]:\bar\gamma_s
      \text{ is not globally convex}\}.
\]
If \(B\neq\varnothing\), then openness of global convexity makes \(B\)
closed.  Hence \(s_*=\min B\) exists.  Since
\(\bar\gamma_0\) is globally convex, openness also gives \(s_*>0\);
by the definition of \(s_*\), every \(\bar\gamma_s\) with \(s<s_*\)
is globally convex.  Anisov's Theorem~6, applied after translating
the parameter by \(s_*\), gives a flattening of multiplicity at least
two on \(\bar\gamma_{s_*}\), contradicting the nowhere-vanishing jet
determinant.  Thus \(B=\varnothing\).

\end{proof}

\subsection{The exponent homotopy}

Fix an arbitrary finite \(\beta_*>1\).  The case \(n=1\) is immediate,
because a nonzero \(L/Q^\beta\) has the single projective zero of \(L\).
Assume \(n\geq2\), and set
\[
  N=2n,
  \qquad d=N-1=2n-1,
  \qquad x(\theta)=(\cos\theta,\sin\theta).
\]
For \(1\leq\beta\leq\beta_*\), define
\begin{equation}\label{eq:paired-curve}
  \Gamma_\beta(\theta)=
  \left(
    \frac{x(\theta)}{Q_1(x(\theta))^\beta},\ldots,
    \frac{x(\theta)}{Q_n(x(\theta))^\beta}
  \right)\in\R^N\setminus\{0\},
\end{equation}
and let \(\gamma_\beta=\Gamma_\beta/\|\Gamma_\beta\|\in\Sph^d\).
Hyperplane sections of \(\Gamma_\beta\) are exactly the elements of
\(\cU_\beta\) restricted to the unit circle.

Let
\begin{equation}\label{eq:delta-beta}
  \Delta_\beta(\theta)
  =\det(\gamma_\beta,\gamma_\beta',\ldots,\gamma_\beta^{(d)}).
\end{equation}
For an \(\mathbb R^N\)-valued curve \(C=C(r)\), write
\[
  \mathcal D_r C=\det(C,\partial_r C,\ldots,\partial_r^{N-1}C),
  \qquad J=\frac{N(N-1)}2.
\]
Triangularity of derivative jets gives
\begin{equation}\label{eq:jet-transformation-laws}
  \mathcal D_\theta(C\circ s)=(s')^J\bigl(\mathcal D_sC\bigr)\circ s,
  \qquad
  \mathcal D_r(hC)=h^N\mathcal D_rC,
  \qquad
  \mathcal D_r(AC)=\det(A)\mathcal D_rC,
\end{equation}
for a scalar function \(h\) and a constant matrix \(A\).  In derivative
order \(j\), the leading terms are respectively \((s')^jC^{(j)}\) and
\(hC^{(j)}\); all other terms involve earlier jet columns.  In particular,
\begin{equation}\label{eq:sphere-jet-factor}
  \mathcal D_\theta\!\left(\frac{\Gamma}{\|\Gamma\|}\right)
  =\|\Gamma\|^{-N}\mathcal D_\theta\Gamma.
\end{equation}

Fix \(\theta_0\) with \(\cos\theta_0\neq0\), and set
\[
  t=\tan\theta,\quad t_0=\tan\theta_0,\quad s=t-t_0,
  \quad c_i=Q_i(1,t_0)>0,
  \quad q_i(s)=c_i^{-1}Q_i(1,t_0+s).
\]
Let \(\Phi_\beta(s)\) be the column vector of the local paired functions
in \eqref{eq:local-pairs}, and put
\[
  B_u=\begin{pmatrix}1&0\\ t_0&1\end{pmatrix},
  \qquad
  P_u=\operatorname{diag}_{i=1}^n(c_i^{-\beta}B_u),
  \qquad
  h_u(\theta)=\cos\theta\,|\cos\theta|^{-2\beta}.
\]
Then \(\Gamma_\beta(\theta)=h_u(\theta)P_u
\Phi_\beta(s(\theta))\).  Since the Wronskian in
\cref{sec:local-wronskian} is the transpose of the corresponding
jet-column determinant, \eqref{eq:jet-transformation-laws} and
\eqref{eq:sphere-jet-factor} yield
\begin{equation}\label{eq:delta-wronskian-u-chart}
  \Delta_\beta(\theta_0)
  =\|\Gamma_\beta(\theta_0)\|^{-N}h_u(\theta_0)^N
    (\sec^2\theta_0)^J
    \left(\prod_{i=1}^n c_i^{-2\beta}\right)W_\beta(0).
\end{equation}
Translation has derivative one, quadratic normalization contributes the
positive product displayed above, and \(\det B_u=1\).

The complementary chart is equally explicit.  If
\(\sin\theta_0\neq0\), use \(t=\cot\theta\), \(s=t-t_0\),
\(c_i=Q_i(t_0,1)\),
\(q_i(s)=c_i^{-1}Q_i(t_0+s,1)\), and
\[
  B_v=\begin{pmatrix}t_0&1\\1&0\end{pmatrix},
  \qquad
  P_v=\operatorname{diag}_{i=1}^n(c_i^{-\beta}B_v),
  \qquad
  h_v(\theta)=\sin\theta\,|\sin\theta|^{-2\beta}.
\]
Then \(\Gamma_\beta(\theta)=h_v(\theta)P_v
\Phi_\beta(s(\theta))\).  Here
\(ds/d\theta=-\csc^2\theta\) and \(\det B_v=-1\), so
\begin{equation}\label{eq:delta-wronskian-v-chart}
  \Delta_\beta(\theta_0)
  =\|\Gamma_\beta(\theta_0)\|^{-N}h_v(\theta_0)^N
    (-\csc^2\theta_0)^J(-1)^n
    \left(\prod_{i=1}^n c_i^{-2\beta}\right)W_\beta(0).
\end{equation}
Because \(N=2n\), the possibly negative common factors \(h_u,h_v\) have
positive \(N\)-th powers.  Moreover \(J=n(2n-1)\equiv n\pmod2\), so in
\eqref{eq:delta-wronskian-v-chart} the reparameterization sign cancels the
sign from the \(n\) numerator-basis blocks.  Thus the multipliers in both
chart formulas are positive.  The two charts cover the circle, including
the projective point at infinity.  In each formula, \(W_\beta(0)\) denotes
the locally normalized Wronskian for the displayed chart.  Consequently
\cref{cor:wronskian-nonzero} gives
\(\Delta_\beta(\theta)\neq0\) for \(1<\beta\leq\beta_*\), while
\cref{prop:rational-normal} gives the same conclusion at \(\beta=1\).
Hence \(\Delta_\beta\) is continuous and nonzero on the connected set
\([1,\beta_*]\times(\R/2\pi\mathbb Z)\), and has one constant sign.

Alves uses the positive convention.  If the common sign is negative, fix
once and for all an orthogonal map \(O\in O(N)\) with \(\det O=-1\); if it
is positive, take \(O=I\).  Then
\begin{equation}\label{eq:positive-orientation}
  \det(O\gamma_\beta,O\gamma_\beta',\ldots,
       O\gamma_\beta^{(d)})>0
\end{equation}
throughout the exponent path.  This single fixed transformation does not
change any zero set or multiplicity, because for every real linear functional \(\ell_0\),
\(\ell_0(O\gamma)=(O^{\mathsf T}\ell_0)(\gamma)\); it also preserves global
projective convexity.  The orientation-reversing map \(O\) is not
Spin-lifted.  All Frenet and Spin constructions below occur only after
\eqref{eq:positive-orientation}.

For comparison with the fixed-\(\operatorname{Spin}\) formulation, fix \(c\) and use the positively
oriented interval \([c,c+\pi]\), reparameterized by
\(\theta=c+\pi t\), \(0\leq t\leq1\).  Let
\(F_{\beta,c}(\theta)\in SO(N)\) be the Frenet frame obtained from the
positive jet of \(O\gamma_\beta\) by Gram--Schmidt with positive diagonal.
Apply the fixed rotation \(F_{\beta,c}(c)^{-1}\) to normalize the initial
frame to \(I\).

Quadratic homogeneity gives
\begin{equation}\label{eq:antiperiodic}
  \Gamma_\beta(\theta+\pi)=-\Gamma_\beta(\theta).
\end{equation}
The norm is \(\pi\)-periodic, and the same antiperiodicity holds for every
derivative after the fixed orthogonal and initial-frame normalizations.
In particular,
\([O\gamma_\beta(\theta+\pi)]=[O\gamma_\beta(\theta)]\) in projective
space.
Thus the terminal jet is the negative of the initial jet.  Uniqueness of
positive-diagonal Gram--Schmidt and the evenness of \(N=2n\) give the
relative terminal frame
\begin{equation}\label{eq:terminal-minus-I}
  -I\in SO(N).
\end{equation}
Define the initial-normalized frame family
\[
  H(\beta,t)=F_{\beta,c}(c)^{-1}F_{\beta,c}(c+\pi t),
  \qquad (\beta,t)\in[1,\beta_*]\times[0,1].
\]
It is continuous, with \(H(\beta,0)=I\) and \(H(\beta,1)=-I\).
The homotopy-lifting property for the covering
\(\operatorname{Spin}(N)\to SO(N)\), applied with the fixed lift \(1\) on
the initial edge, gives a unique continuous lift \(\widetilde H\) on this
parameter rectangle.  Thus \(z_\beta=\widetilde H(\beta,1)\) depends
continuously on \(\beta\) and lies in the discrete two-point fiber over
\(-I\).  It is therefore constant:
\begin{equation}\label{eq:spin-constant}
  z_\beta=z
  \qquad(1\leq\beta\leq\beta_*).
\end{equation}
Thus the interval path lies in a single fixed
\(\cLS^{2n-1}(z)\).  This observation is not used in the projective argument below.

All required regularity is uniform on this finite exponent interval.  Each
\(Q_i\) has a positive minimum on the circle, and
\[
  Q_i(x(\theta))^{-\beta}
  =\exp[-\beta\log Q_i(x(\theta))]
\]
and all its \(\theta\)-derivatives are jointly continuous on the compact
rectangle.  Sphere normalization, Gram--Schmidt on the nondegenerate jet
locus, and initial-frame normalization preserve continuity.  Hence the
path is continuous in every finite \(C^r\) topology, in particular the
\(C^d\) topology used for smooth locally convex curves in
\cite{Alves2016}.

The sphere dimension is \(d=2n-1\geq3\).  Equation
\eqref{eq:spin-constant} records compatibility with the fixed-endpoint
formulation in \cref{thm:convex-component}; neither the fixed-\(\operatorname{Spin}\) theorem 
nor any choice of a cut or endpoint convention is used in the following projective argument.
Anti-periodicity
\eqref{eq:antiperiodic} gives a closed projective curve
\[
  \bar\gamma_\beta=[O\gamma_\beta]:
  \R/\pi\mathbb Z\longrightarrow\RP^d
\]
for every \(\beta\in[1,\beta_*]\).  A half-open interval
\([c,c+\pi)\) is a fundamental domain (equivalently, use
\([c,c+\pi]\) with its endpoints identified).  Under
\(\varphi=2\theta\), this is Anisov's ordinary parameter circle.  Reading the
same \(\theta\) modulo \(2\pi\) would instead traverse the projective curve
twice, which does not preserve convexity in Anisov's convention
\cite{Anisov1998}.  Since \(O\) is invertible, injectivity reduces to the
untransformed curve, whose first two-coordinate block recovers
\([x(\theta)]\).  
Thus the descended map is injective; the established nonvanishing 
of \(\Delta_\beta\) in \eqref{eq:delta-beta} makes it an immersion, hence an embedded closed curve.
The
determinant
\eqref{eq:delta-beta} says precisely that these curves have no flattenings,
and the regularity just verified is more than the \(C^{d+1}\) continuity
needed in \cref{cor:closed-projective-persistence}.  At \(\beta=1\),
\cref{prop:rational-normal} gives a globally convex rational normal curve;
the fixed orthogonal map and sphere normalization preserve its projective
hyperplane intersections.  Hence
\cref{cor:closed-projective-persistence} makes the whole exponent path
globally convex, in particular its endpoint at \(\beta_*\).

\begin{proof}[Proof of \cref{thm:paired-haar}]
Since \(\beta_*>1\) was arbitrary, the preceding conclusion holds for
every finite \(\beta>1\).  Let a nonzero hyperplane section be given.
The nonvanishing full-jet determinant implies that no nonzero hyperplane section vanishes identically, and analyticity on the
compact half-period circle \(\R/\pi\mathbb Z\) makes its zero set finite.
Global convexity gives at most \(d=2n-1\) projective zeros there, counted
with multiplicity.  Equivalently, one may choose a cut \(c\) away from the
zero set; then every zero has exactly one representative in
\((c,c+\pi)\), and neither endpoint is a zero.  
This choice of interval merely selects one representative of each projective zero; it is not used in applying Anisov's theorem.  
Together
with the linear independence supplied by
\cref{cor:wronskian-nonzero}, this proves the theorem.
\end{proof}

\section{Proof of the signed line theorem}
\label{sec:signed-proof}

We now return to the reduced data of \cref{sec:main-results}.  Put
\begin{equation}\label{eq:signed-beta}
  \beta=\alpha+1>1,
  \qquad
  E_0(t)=-\frac{F_{\rm red}'(t)}{2\alpha}
  =\sum_{j=1}^{m}c_j
  \frac{t-a_j}{\bigl((t-a_j)^2+b_j^2\bigr)^\beta}.
\end{equation}

\subsection{Exact cancellation}

\begin{lemma}[Exact cancellation classification]
\label[lemma]{lem:exact-cancellation}
The original restriction in \eqref{eq:intro-signed} satisfies
\begin{equation}\label{eq:exact-cancellation-classification}
  F_\alpha\equiv0\ \hbox{on }D_{\rm orig}
  \quad\Longleftrightarrow\quad
  \widehat\eta_C=0\ \hbox{for every }C\in\mathfrak C.
\end{equation}
Equivalently, its reduced extension is the zero function.  Thus distinct
reduced kernels are linearly independent, including at integer and
half-integer values of \(\alpha\).
\end{lemma}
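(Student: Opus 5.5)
The backward implication is immediate. If every class sum vanishes, group the terms of \eqref{eq:intro-signed} by class. Every term in a class \(C\) is the same function \(((t-a_C)^2+b_C^2)^{-\alpha}\) on \(D_{\rm orig}\), so each group contributes \(\widehat\eta_C\) times that kernel. Hence \(F_\alpha\equiv0\) on \(D_{\rm orig}\), and \(F_{\rm red}\) is the empty sum.

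For the forward implication, the same grouping gives \(F_\alpha=F_{\rm red}\) on \(D_{\rm orig}\), as recorded in \eqref{eq:domain-inclusion}. So it suffices to prove that the kernels \(k_j(t)=((t-a_j)^2+b_j^2)^{-\alpha}\), for distinct pairs \((a_j,b_j^2)\), are linearly independent as functions on \(D_{\rm orig}\). Note that \(D_{\rm orig}\) is \(\R\) minus finitely many points, hence contains open intervals. Suppose \(\sum_j c_jk_j\equiv0\) on \(D_{\rm orig}\) with all \(c_j\neq0\) and \(m\geq1\); we seek a contradiction.

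I will use analytic continuation in the complex variable \(t\). Each \(k_j\) extends holomorphically to \(\mathbb C\) minus its branch points or poles at \(t=a_j\pm ib_j\), using a branch cut chosen off the real axis when \(b_j>0\). The relation holds on a real interval, so it holds on the connected domain obtained by removing all the singular points together with cuts.

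Pick a singular point \(w=a_j+ib_j\), and take \(b_j\) maximal among the active classes. Cuts can be drawn vertically upward from each \(a_k+ib_k\), so \(w\) is reached from below without crossing any cut.
\begin{itemize}
\item If \(b_j>0\), then near \(w\) we have \(k_j(t)=(t-w)^{-\alpha}(t-\bar w)^{-\alpha}\). Its modulus blows up like \(|t-w|^{-\alpha}\) as \(t\to w\), while every other kernel stays bounded near \(w\), because its singularities \(a_k\pm ib_k\) differ from \(w\). Since \(c_j\neq0\), the sum cannot vanish.
\item If all heights are zero, then \(k_j=|t-a_j|^{-2\alpha}\) on the real line. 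Each \(k_j\) blows up at its own \(a_j\) while the others stay bounded there; this is a contradiction if \(a_j\) lies in the closure of \(D_{\rm orig}\), which it always does. Alternatively, approach \(a_j\) along the real line from within \(D_{\rm orig}\), since \(a_j\) is at worst a puncture.
\end{itemize}
In fact the real-line argument also handles the positive-height case once the kernels are viewed as analytic functions.

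The cleanest uniform argument, which I plan to use, is this: for each \(j\), consider the holomorphic continuation along a path from the real interval to a neighbourhood of \(a_j+ib_j\) that avoids the other singular points. Then \(c_jk_j=-\sum_{k\neq j}c_kk_k\), where the right side is holomorphic near \(a_j+ib_j\) but the left side is unbounded there, since \(\alpha>0\).

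The main subtlety is that two distinct pairs never share a singular point: in the upper half-plane \(a+ib\) determines \((a,b^2)\) when \(b\geq0\). This is exactly why the equivalence \eqref{eq:kernel-equivalence} is the right one. It also explains the insistence on "integer and half-integer \(\alpha\)": at those exponents the kernels are rational or square-root functions, so monodromy arguments would fail, but the growth argument is unaffected. The equivalence with \(m=0\) is then a restatement.
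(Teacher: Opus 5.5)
Your argument is correct, but its decisive step differs from the paper's.

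The paper works in two stages. First it rules out active zero-height classes by the real pole asymptotics $c|t-a_C|^{-2\alpha}$, which is exactly your second bullet. Every remaining class then has positive height, so $F_{\rm red}$ is real analytic on all of $\R$. Second, it differentiates to get the paired section $\sum_j c_j(t-a_j)\bigl((t-a_j)^2+b_j^2\bigr)^{-\alpha-1}$ and invokes the dimension assertion of \cref{thm:paired-haar}, i.e.\ linear independence of the $2m$ paired generators. The paper derives that assertion from the Gegenbauer--Wronskian identity and the critical-index theorem of \cref{sec:local-wronskian,sec:index}.

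Your path-continuation argument replaces both stages with one observation. Distinct classes have distinct singular points $w_j=a_j+ib_j$ in the closed upper half-plane. Because $\alpha$ is real, every branch of $\bigl((t-a_j)^2+b_j^2\bigr)^{-\alpha}$ has modulus $|t-w_j|^{-\alpha}|t-\bar w_j|^{-\alpha}$. So the $j$-th term is unbounded at $w_j$, while the continuations of all other terms are holomorphic there.

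What your route buys: it is elementary, uniform in the height, valid for every $\alpha>0$, and independent of the Haar machinery. The same reasoning even proves the dimension part of \cref{thm:paired-haar} directly for every $\beta>0$, since a real numerator $A+Bt$ cannot vanish at a nonreal $w_j$ unless $A=B=0$. What the paper's route buys: it costs two lines once that theorem is available, and it stays on the real line.

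One caution. For a zero-height class, $|t-a_j|^{-2\alpha}$ on all of $\R\setminus\{a_j\}$ is not a single holomorphic function off a cut unless $\alpha\in\mathbb Z$. Your opening claim that each $k_j$ extends holomorphically must therefore be read, as in your final version, as continuation from one component interval of $D_{\rm orig}$. With that reading, the bullet about vertical cuts and maximal $b_j$ is unnecessary and can be dropped.
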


\begin{proof}
The reverse implication is immediate after grouping.  Conversely, suppose
that \(F_\alpha\) vanishes on \(D_{\rm orig}\).  If a class with
\(b_C=0\) had nonzero class sum \(c\), then as \(t\to a_C\) its combined
term would be \(c|t-a_C|^{-2\alpha}\), while every other reduced class
would have a different pole or remain bounded there.  This is impossible.
Hence all zero-height class sums vanish.

Thus every active class has positive height.  The reduced sum \(F_{\rm red}\) is real analytic on \(\R\) and
vanishes on a nonempty open subinterval of \(D_{\rm orig}\).  Differentiating
there gives
\begin{equation}\label{eq:cancellation-derivative}
  -\frac{F_{\rm red}'(t)}{2\alpha}
  =\sum_j c_j\frac{t-a_j}
  {\bigl((t-a_j)^2+b_j^2\bigr)^{\alpha+1}}.
\end{equation}
The associated monic positive-definite quadratics are pairwise
nonproportional, and the right side is a paired section with
\(\beta=\alpha+1>1\).  This real-analytic paired section vanishes on an
open interval and hence vanishes identically on \(\R\).  By the dimension
assertion in \cref{thm:paired-haar}, its paired generators are linearly
independent.
It can vanish identically only when every remaining class sum is zero.
\end{proof}

\subsection{The smooth reduced count}

For \(b>0\), define
\begin{equation}\label{eq:physical-quadratics}
  Q_{a,b}(u,v)=(u-av)^2+b^2v^2.
\end{equation}
The normalization in \eqref{eq:physical-quadratics}, in which the coefficient of \(u^2\) is \(1\), permits no nontrivial scalar proportionality.

\begin{lemma}[Proportionality]\label[lemma]{lem:physical-proportionality}
If \(b,b'\geq0\)
and \(Q_{a,b}=cQ_{a',b'}\) for a real constant \(c\), then
\[
  c=1,
  \qquad a=a',
  \qquad b^2=(b')^2.
\]
In particular, for nonnegative heights the two kernels are identical.
\end{lemma}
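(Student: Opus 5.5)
The plan is to compare coefficients after expanding both sides as polynomials in \((u,v)\). By \eqref{eq:physical-quadratics},
\[
  Q_{a,b}(u,v)=u^2-2a\,uv+(a^2+b^2)v^2 .
\]
The argument needs no positivity assumption on the heights: it uses only that each height enters through its square. That is why the statement is made for \(b,b'\geq0\), which covers the rank-one case \(b=0\) needed later in the singular part of \cref{thm:signed-line}.

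\emph{Step 1 (the scalar).} Suppose \(Q_{a,b}=cQ_{a',b'}\) as polynomials. The coefficient of \(u^2\) is \(1\) on the left and \(c\) on the right, so \(c=1\). This is exactly the monic normalization remarked on before the lemma.

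\emph{Step 2 (the centre).} With \(c=1\), the \(uv\) coefficients read \(-2a=-2a'\), so \(a=a'\).

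\emph{Step 3 (the height).} The \(v^2\) coefficients read \(a^2+b^2=a'^2+b'^2\). Since \(a=a'\), this gives \(b^2=(b')^2\). Because \(b,b'\geq0\), it also gives \(b=b'\).

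Consequently the parameter pairs \((a,b^2)\) coincide, and the two restricted kernels \(((t-a)^2+b^2)^{-\alpha}\) agree identically. This is precisely the kernel equivalence \eqref{eq:kernel-equivalence}.

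There is no real obstacle here. The only point needing care is that equality of forms means equality of coefficients: two real binary forms agreeing as functions on \(\R^2\) are equal as polynomials. This is also what makes the reduced active classes, which have distinct \((a_j,b_j^2)\), give pairwise nonproportional quadratics, the hypothesis required by \cref{thm:paired-haar}.
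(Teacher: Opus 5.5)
Your proposal is correct and is essentially the paper's proof: expand \(Q_{a,b}(u,v)=u^2-2a\,uv+(a^2+b^2)v^2\) and compare the coefficients of \(u^2\), \(uv\), and \(v^2\) in that order. Your extra remarks, that only \(b^2\) enters and that \(b,b'\geq0\) then gives \(b=b'\), are accurate.
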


\begin{proof}
Expanding gives
\[
  Q_{a,b}(u,v)=u^2-2auv+(a^2+b^2)v^2.
\]
Comparison of the coefficients of \(u^2\), \(uv\), and \(v^2\), in that
order, gives the three asserted equalities.
\end{proof}

\begin{proposition}[Smooth signed count]\label[proposition]{prop:smooth-signed-count}
If \(m\geq1\) and every effective \(b_j>0\), then
\eqref{eq:smooth-signed-bound} holds.
\end{proposition}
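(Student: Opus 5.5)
The plan is to homogenize \(E_0\) from \eqref{eq:signed-beta} and recognize it as a nonzero member of the paired space \(\cU_\beta(Q_{a_1,b_1},\ldots,Q_{a_m,b_m})\) with \(\beta=\alpha+1>1\). The bound then follows from \cref{thm:paired-haar}.

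\textbf{Homogenization.} With the affine chart \(t=u/v\), \(v\neq0\), set
\[
\widetilde E(u,v)=\sum_{j=1}^m c_j\frac{(u-a_jv)\,v^{2\beta-1}}{Q_{a_j,b_j}(u,v)^\beta}.
\]
The factor \(v^{2\beta-1}\) is not a linear form, so it is cleaner to restrict the paired section directly to the line \(v=1\). Take
\[
S(u,v)=\sum_j c_j\frac{L_j(u,v)}{Q_{a_j,b_j}(u,v)^\beta},\qquad L_j(u,v)=u-a_jv .
\]
Then \(S(t,1)=E_0(t)\). The quadratics \(Q_{a_j,b_j}\) are positive definite because \(b_j>0\). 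They are pairwise nonproportional by \cref{lem:physical-proportionality}, since the active classes have distinct \((a_j,b_j^2)\).

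\textbf{Nonvanishing.} I would show \(S\neq0\) using the dimension assertion of \cref{thm:paired-haar}. The generators \(L/Q_j^\beta\) are linearly independent, each \(L_j\) is nonzero, and each \(c_j\neq0\). Hence \(S\) is a nonzero element of the \(2m\)-dimensional space.

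\textbf{Transferring zeros and orders.} Every real zero \(t_0\) of \(F_{\rm red}'\) is a zero of \(E_0\), since \(\alpha\neq0\). The point \([t_0:1]\) lies in the affine chart \(v\neq0\), and there the section \(S\) is represented by the function \(t\mapsto S(t,1)=E_0(t)\). Changing to the unit-circle representative multiplies by a smooth positive or nonvanishing factor (\((1+t^2)^{\beta-1/2}\) up to sign), which does not change analytic orders. So \(\operatorname{ord}_{t_0}F_{\rm red}'=\operatorname{ord}_{t_0}E_0\) equals the projective order of \(S\) at \([t_0:1]\).

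\textbf{Conclusion.} Distinct real \(t\) give distinct projective points. Summing over real zeros therefore counts a subset of the projective zeros of \(S\), leaving out any zero at \([1:0]\). By \cref{thm:paired-haar}, the total is at most \(2m-1\). When \(m=1\), the case \(n=1\) treated in the exponent homotopy (a single zero of \(L\)) already gives the bound, and this is consistent with the general statement.

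\textbf{Main obstacle.} I expect the main obstacle to be the bookkeeping in the order transfer: checking that the affine representative of a section of the tautological bundle has the same local order as the circle representative used in the theorem. This holds because the two differ by a nowhere-vanishing analytic factor, \(|\cos\theta|^{-2\beta}\cos\theta\) and the chart change \(t=\tan\theta\), as in the chart formulas of \cref{sec:globalization}.
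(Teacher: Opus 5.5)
Your proof is correct and follows the paper's argument: you realize \(E_0\) as the affine restriction of the nonzero paired section \(\sum_j c_j(u-a_jv)/Q_{a_j,b_j}^\beta\), whose denominators are pairwise nonproportional and positive definite, and then apply the projective multiplicity bound of \cref{thm:paired-haar}, where a possible zero at infinity only uses part of the \(2m-1\) budget. One slip is cosmetic: the chart \(v=1\) is the paper's \(v\)-chart, with \(t=\cot\theta\) and multiplier \(\sin\theta\,|\sin\theta|^{-2\beta}\), not \(t=\tan\theta\) with \(\cos\theta\,|\cos\theta|^{-2\beta}\); your factor \((1+t^2)^{\beta-1/2}\), up to sign, is the right one, so the order-preservation step is unaffected.
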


\begin{proof}
For each reduced class \(j=1,\ldots,m\), set
\begin{equation}\label{eq:smooth-section-data}
  Q_j(u,v)=(u-a_jv)^2+b_j^2v^2,
  \qquad L_j(u,v)=c_j(u-a_jv).
\end{equation}
The quadratics are positive definite and pairwise nonproportional by
\cref{lem:physical-proportionality}.  In the affine chart \(v=1\), the
paired section \(\sum_jL_j/Q_j^\beta\) is \(E_0\).  Since every \(c_j\)
is nonzero and the paired generators are independent, this section is
nonzero.  The projective multiplicity bound in \cref{thm:paired-haar} is
\(2m-1\).  Restricting that count to the finite affine chart proves
\eqref{eq:smooth-signed-bound}; any zero at infinity merely consumes part
of the same projective budget.
\end{proof}

\subsection{One global count in the singular case}

Now suppose at least one effective \(b_j=0\).  The domain is the punctured
set \(D_{\rm red}\) in \eqref{eq:reduced-domain}.  Write
\begin{equation}\label{eq:G-beta}
  G_\beta(a,b;t)=
  \frac{t-a}{\bigl((t-a)^2+b^2\bigr)^\beta},
\end{equation}
so that \(E_0=\sum_jc_jG_\beta(a_j,b_j;\cdot)\).

For an effective zero-height class with parameter pair \((a_j,0)\), put \(s=t-a_j\). The corresponding pole is at \(t=a_j\).
Reduction makes this
the unique singular kernel at \(a_j\), and all other terms are bounded
there.  Consequently
\begin{equation}\label{eq:pole-asymptotics}
  F_{\rm red}(a_j+s)=c_j|s|^{-2\alpha}+O(1),
  \qquad
  E_0(a_j+s)=c_j\sgn(s)|s|^{-2\alpha-1}+O(1).
\end{equation}
Thus each pole lies outside the domain, and a sufficiently small punctured neighborhood of 
every pole contains no zero of \(E_0\).  
On every component of
\(D_{\rm red}\), the field \(E_0\) is real analytic.  
Every component is adjacent to an effective pole, and the asymptotic in 
\eqref{eq:pole-asymptotics} precludes \(E_0\) from vanishing identically on that component.
All zeros in the domain are therefore isolated and have finite analytic
multiplicity.

Choose one effective pole \((a_*,0)\) with coefficient \(c_*\neq0\).  For
a small nonzero real \(\delta\), replace it by
\begin{equation}\label{eq:signed-pole-displacement}
  (a_{*,\delta},b_{*,\delta})
  =\bigl(a_*+\sgn(c_*)\delta,|\delta|\bigr),
\end{equation}
give every other zero-height class the height \(|\delta|\) without moving
its projection, and leave all positive heights unchanged.  Let
\(E_\delta\) be the resulting smooth field.  On every compact
\(K\Subset D_{\rm red}\) and for each fixed \(r\geq0\), Taylor expansion
in the projection parameter and in \(b^2\) gives
\begin{equation}\label{eq:singular-expansion}
  \left\|E_\delta-E_0-\delta h\right\|_{C^r(K)}
  \leq C_{K,r}\delta^2
\end{equation}
for all sufficiently small \(|\delta|\), where
\begin{equation}\label{eq:positive-first-variation}
  h(t)=(2\beta-1)|c_*|\,|t-a_*|^{-2\beta}>0.
\end{equation}
Indeed,
\begin{equation}\label{eq:first-variation-calculation}
  \left.
  \frac{\partial}{\partial a}G_\beta(a,0;t)
  \right|_{a=a_*}
  =(2\beta-1)|t-a_*|^{-2\beta},
\end{equation}
the product \(c_*\sgn(c_*)\) is \(|c_*|\), and every height change enters
through \(b^2=\delta^2\).  The expansion has the same positive first
variation for both signs of \(\delta\).

Suppose, for a contradiction, that \(E_0\) has at least \(2m\) distinct
zeros anywhere in \(D_{\rm red}\).  Select exactly \(2m\) of them across
all components, and call this finite set \(S\).  Choose pairwise disjoint
compact isolating intervals
\begin{equation}\label{eq:isolating-intervals}
  I_q=[q-\varepsilon_q,q+\varepsilon_q]\Subset D_{\rm red},
  \qquad q\in S,
\end{equation}
such that \(q\) is the only zero of \(E_0\) in \(I_q\).  Shrink them so
that an odd-multiplicity root has opposite endpoint signs; an
even-multiplicity root with positive first nonzero Taylor coefficient has
two positive endpoint values; and an even root with negative first
coefficient has two negative endpoint values.  Let \(O\), \(A\), and
\(B\) denote the respective numbers of selected odd, positive-oriented
even, and negative-oriented even roots.

Apply \eqref{eq:singular-expansion} to the one compact set
\begin{equation}\label{eq:global-compact-union}
  K=\bigcup_{q\in S}I_q.
\end{equation}
Because this is a finite union, there is one \(\delta_0>0\) such that,
simultaneously for every selected root and both perturbation signs, all
endpoint signs are preserved and
\begin{equation}\label{eq:even-root-center-sign}
  E_\delta(q)=\delta h(q)+O(\delta^2)
\end{equation}
has the sign of \(\delta\) whenever \(0<|\delta|<\delta_0\).  
Equivalently, there exists \(\delta_0>0\) such that, for every \(q\in S\) and 
every \(\delta\) with \(0<|\delta|<\delta_0\), the endpoint signs of \(I_q\) are preserved and \(E_\delta(q)\) has the sign of \(\delta\). 
The same \(\delta_0\) works for all \(q\in S\) and for both signs of \(\delta\).

Every odd root persists for either sign by its endpoint signs.  If
\(\delta<0\), each of the \(A\) positive-oriented even roots has negative
center value and positive endpoint values, and therefore gives one root on
each side of its center.  If \(\delta>0\), each of the \(B\)
negative-oriented even roots analogously gives two roots.  Choosing the
better sign yields at least
\begin{equation}\label{eq:majority-unfolding}
  O+2\max(A,B)\geq O+A+B=2m
\end{equation}
distinct roots of one smooth field, all in the selected disjoint intervals.

Only finitely many nonzero values of \(\delta\) can create a collision 
among the perturbed parameter pairs. Choose the sign selected above and 
then choose a common \(\delta\) with \(0<|\delta|<\delta_0\) outside this finite set. 
The resulting field has \(m\) pairwise nonproportional 
positive-definite quadratic denominators, so 
\cref{prop:smooth-signed-count} gives at most \(2m-1\) real zeros counted with multiplicity.
This contradicts
\eqref{eq:majority-unfolding} and proves the global estimate
\eqref{eq:singular-signed-bound}.  
Notice that the proof selected roots across all 
components, used one compact union and one common perturbation, and did not 
obtain the global estimate by summing separate bounds over the connected components of \(D_{\rm red}\).

\subsection{Multipole tails and the point at infinity}
\label{sec:multipole-tails}

For the finite reduced data, choose \(R>0\) so large that every binomial
series below converges whenever \(|t|>R\).  For each reduced parameter pair
define coefficients \(P_r(a,b)\) by
\begin{equation}\label{eq:multipole-expansion}
  \bigl((t-a)^2+b^2\bigr)^{-\alpha}
  =|t|^{-2\alpha}\sum_{r\geq0}P_r(a,b)t^{-r}.
\end{equation}
The first four are
\begin{align}
  P_0(a,b)&=1,\label{eq:multipole-P0}\\
  P_1(a,b)&=2\alpha a,\label{eq:multipole-P1}\\
  P_2(a,b)&=\alpha\bigl((2\alpha+1)a^2-b^2\bigr),
    \label{eq:multipole-P2}\\
  P_3(a,b)&=2\alpha(\alpha+1)a
  \left(\frac{2\alpha+1}{3}a^2-b^2\right).
    \label{eq:multipole-P3}
\end{align}
Set
\begin{equation}\label{eq:multipole-moments}
  M_r=\sum_{j=1}^{m}c_jP_r(a_j,b_j).
\end{equation}
Then
\begin{equation}\label{eq:reduced-multipole-expansion}
  F_{\rm red}(t)=|t|^{-2\alpha}\sum_{r\geq0}M_rt^{-r}
  \qquad(|t|>R).
\end{equation}
If \(r_0\) is the first index with \(M_{r_0}\neq0\), termwise
differentiation gives
\begin{equation}\label{eq:multipole-derivative-tail}
  F_{\rm red}'(t)
  \sim-(2\alpha+r_0)M_{r_0}|t|^{-2\alpha}t^{-r_0-1}
  \qquad(t\to\pm\infty).
\end{equation}
Thus the derivative is nonzero on both sufficiently remote tails, even when
the total charge and several lower moments vanish.  If every \(M_r\)
vanished, the convergent expansion would make \(F_{\rm red}=0\) on
\((R,\infty)\).  
If an effective zero-height class existed, choose 
one with largest center and denote its parameter pair and coefficient by \((a_{\max},0)\) and \(c_{\max}\), respectively.
By maximality no effective pole lies in
\((a_{\max},\infty)\), so every reduced summand, and hence \(F_{\rm red}\), is
real analytic there.  The identity theorem would extend the tail identity to
that whole component.  But, writing \(t=a_{\max}+s\), reduction makes
\(c_{\max}\neq0\), and no other reduced class has the pole
\((a_{\max},0)\).  Thus every other term is bounded and
\[
  F_{\rm red}(a_{\max}+s)
  =c_{\max}s^{-2\alpha}+O(1)
  \qquad(s\downarrow0),
\]
with \(c_{\max}\neq0\), a contradiction for every \(\alpha>0\), regardless
of the sign of \(c_{\max}\).  
Thus, under the supposition that every \(M_r\) vanishes, no effective 
zero-height class can exist. All effective heights are then positive, so 
\(F_{\rm red}\) is real analytic on \(\mathbb R\); since it vanishes 
on \((R,\infty)\), the identity theorem gives \(F_{\rm red}\equiv0\) on \(\mathbb R\). 
\cref{lem:exact-cancellation} then forces \(m=0\). Consequently, 
when \(m\geq1\), not all moments \(M_r\) vanish, and \(r_0\) exists.

This tail calculation makes explicit that no affine roots accumulate at
infinity.  
In the smooth case this requires no additional zero-counting argument: any zero 
at infinity arising from moment cancellation is already counted, with its multiplicity, in the projective bound \(2m-1\).

\begin{proof}[Proof of \cref{thm:signed-line}]
The equivalences in \eqref{eq:exact-cancellation-equivalences} follow from
\cref{lem:exact-cancellation} together with the definition of \(m\).
If \(m=0\), the derivative vanishes at every point of \(D_{\rm orig}\).
For \(m\geq1\), \cref{prop:smooth-signed-count} proves the smooth
multiplicity assertion, and the globally coordinated perturbation above
proves the singular distinct-point assertion.  The inclusion in
\eqref{eq:domain-inclusion} transfers either estimate to the original
domain.  Finally \(m\leq|\mathfrak C|\leq\ell\), giving
\eqref{eq:presentation-bound}.
\end{proof}

\section{Point-charge consequences}
\label{sec:physical-consequences}
\label{sec:physical}

Let \(A_1,\ldots,A_n\in\R^3\), let \(c_i>0\), and define the positive
inverse-power potential
\begin{equation}\label{eq:intro-potential}
  V_p(x)=\sum_{i=1}^n\frac{c_i}{|x-A_i|^p},
  \qquad p>0.
\end{equation}
Parameterize a prescribed line \(\Lambda\) at unit speed by
\[
  x(t)=x_0+te,
  \qquad |e|=1.
\]
For each source, set
\begin{equation}\label{eq:line-geometry}
  a_i=\langle A_i-x_0,e\rangle,
  \qquad
  b_i=\dist(A_i,\Lambda).
\end{equation}
Then
\begin{equation}\label{eq:line-distance}
  |x(t)-A_i|^2=(t-a_i)^2+b_i^2,
\end{equation}
and direct differentiation gives
\begin{equation}\label{eq:physical-derivative}
  \frac{d}{dt}V_p(x(t))=-p E_p(t),
  \qquad
  E_p(t)=\sum_{i=1}^n c_i
  \frac{t-a_i}{\bigl((t-a_i)^2+b_i^2\bigr)^\beta},
  \qquad
  \boxed{\beta=1+\frac p2}.
\end{equation}
Thus \(x(t)\) is a critical point of \(V_p|_\Lambda\) in its regular domain if and only if \(E_p(t)=0\).

\begin{theorem}[Positive inverse-power Maxwell-line theorem]
\label{thm:physical}
Let \(p>0\), let \(A_1,\ldots,A_n\in\R^3\), let \(c_i>0\), and let
\(\Lambda\) be a straight line.  On
\begin{equation}\label{eq:positive-regular-domain}
  D=\Lambda\setminus\{A_i:A_i\in\Lambda\},
\end{equation}
the restriction of \eqref{eq:intro-potential} has at most \(2n-1\)
distinct critical points.  If no source lies on \(\Lambda\), the same
bound holds with analytic multiplicity.
\end{theorem}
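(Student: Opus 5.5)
The plan is to derive \cref{thm:physical} as a direct specialization of \cref{thm:signed-line}, taking \(\alpha=p/2>0\). The substitution \eqref{eq:line-distance} turns the restriction \(V_p(x(t))\) into exactly the form \eqref{eq:intro-signed}, with \(\ell=n\), coefficients \(\eta_i=c_i>0\), projected centres \(a_i\) and heights \(b_i=\dist(A_i,\Lambda)\geq0\) from \eqref{eq:line-geometry}. Under this parameterization the regular domain \(D\) in \eqref{eq:positive-regular-domain} corresponds to \(D_{\rm orig}\) in \eqref{eq:original-domain}, because \(A_i\in\Lambda\) exactly when \(b_i=0\), and then \(x(a_i)=A_i\). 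By \eqref{eq:physical-derivative}, critical points on \(D\) are precisely the zeros of \(F_\alpha'\) on \(D_{\rm orig}\).

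Next comes the reduction. Since every \(c_i>0\), each class sum \(\widehat\eta_C=\sum_{i\in C}c_i\) is strictly positive. Hence no class cancels, and \(m=|\mathfrak C|\geq1\). This excludes alternative (i). It also gives \(D_{\rm red}=D_{\rm orig}\), because only cancelled zero-height classes could fill punctures. Moreover \(m\leq\ell=n\), as in \eqref{eq:presentation-bound}. If several sources share the same pair \((a_i,b_i^2)\), for example points on a common circle around \(\Lambda\) at equal projection, they merge into one class, and this only improves the bound.

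There are two cases. If no source lies on \(\Lambda\), every \(b_i>0\), so every effective height is positive and alternative (ii) applies. This gives at most \(2m-1\leq2n-1\) zeros of \(F_{\rm red}'=F_\alpha'\) on \(\R=D\), counted with analytic multiplicity. If some source lies on \(\Lambda\), its class has positive sum and height zero, so at least one effective height is zero and alternative (iii) applies. The global estimate \eqref{eq:singular-signed-bound} then bounds the total number of distinct critical points over all components of \(D_{\rm red}=D\) by \(2m-1\leq2n-1\).

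No step is a genuine obstacle, since the analytic content is entirely in \cref{thm:signed-line}. The only point needing care is bookkeeping. First, we must check that the identification of \(D\) with \(D_{\rm orig}\) is exact, so that effective poles are excluded, consistent with \cref{rem:signed-scope}. Second, we must note that multiplicity is asserted only in the off-line case, matching the statement.
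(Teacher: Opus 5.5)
Your proposal is correct and follows the paper's proof: set \(\alpha=p/2\), observe that positive coefficients make every class sum positive, so \(1\le m\le n\), and invoke alternative (ii) or (iii) of \cref{thm:signed-line} according to whether a source lies on \(\Lambda\). Your explicit check that \(D\), \(D_{\rm orig}\), and \(D_{\rm red}\) coincide is slightly more careful bookkeeping than the paper gives. The paper's extra remark that the positive section has no zero at infinity, because \(\sum_i c_i>0\), is not needed for the stated bound.
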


\begin{proof}
Use \eqref{eq:line-geometry} and put \(\alpha=p/2>0\).  
Positive
coefficients prevent class cancellation, though duplicate restricted
kernels may combine, so the effective count satisfies \(m\leq n\).
In the smooth case, write \(L_i(u,v)=c_i(u-a_iv)\).  The corresponding
projective section also satisfies
\begin{equation}\label{eq:physical-infinity}
  Q_{a_i,b_i}(1,0)=1,
  \qquad
  \sum_i L_i(1,0)=\sum_i c_i>0,
\end{equation}
so it has no zero at infinity.
The applicable alternative of \cref{thm:signed-line} gives at most
\(2m-1\leq2n-1\) critical points.  If all sources are off the line, every
effective height is positive and the smooth multiplicity statement applies.
\end{proof}

\begin{corollary}[EFO Conjecture 3]\label{cor:efo}
For positive charges written in the EFO form \(c_i=\zeta_i^p>0\), the
substitution \(p=2\alpha\) identifies their line derivative with the
positive specialization of \cref{thm:signed-line}.  
Consequently, Conjecture 3 of Edelsbrunner--Fillmore--Oliveira holds throughout its stated range 
\(p\geq1\); moreover, the same conclusion holds for every \(p>0\).
\end{corollary}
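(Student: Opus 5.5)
The proof will be a direct translation of the EFO setting into the positive specialization of \cref{thm:physical}, and through it of \cref{thm:signed-line}. I first write the EFO potential on a line in the form \eqref{eq:intro-potential} with weights \(c_i=\zeta_i^p\). Since \(\zeta_i>0\) and \(p>0\), these weights are strictly positive. I then parameterize the line at unit speed and use \eqref{eq:line-geometry} and \eqref{eq:line-distance}. This puts the restriction exactly in the form \eqref{eq:intro-signed} with \(\eta_i=c_i>0\) and \(\alpha=p/2\), and gives derivative \eqref{eq:physical-derivative}. This is the identification asserted in the first sentence of the corollary.

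The next step is to check that the reduction of \cref{sec:main-results} is harmless. Positivity of every \(\eta_i\) gives \(\widehat\eta_C>0\) for every class. No class is deleted, so \(m=|\mathfrak C|\leq n\geq1\), and \(D_{\rm red}=D_{\rm orig}\) is the regular domain \eqref{eq:positive-regular-domain}. Alternative (i) is therefore excluded. I then apply alternative (ii) when no source lies on the line, and alternative (iii) otherwise. This yields at most \(2m-1\leq2n-1\) critical points: counted with multiplicity in the smooth case, and as distinct points in the singular case. This is precisely \cref{thm:physical}, which I would simply cite.

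The remaining work is bookkeeping against the wording of EFO Conjecture~3. Their exponent range \(p\geq1\) corresponds to \(\alpha\geq\tfrac12\), which lies inside \(\alpha>0\), so the conjecture follows. The extension to all \(p>0\) is immediate because \cref{thm:signed-line} has no lower restriction on \(\alpha\); the paired Haar input is used only with \(\beta=1+p/2>1\).

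The only point I expect to need care is matching conventions. First, EFO count critical points on the line minus the charges lying on it, which is the distinct-point statement in the singular case. Second, their count of \(n\) charges may include coincident charges or charges with identical projected data, which only lowers \(m\). Third, a zero at infinity is impossible by \eqref{eq:physical-infinity}, so the affine count equals the projective one. None of these steps involves new estimates.
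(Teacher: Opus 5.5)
Your proposal is correct and follows the paper's own route. The corollary is \cref{thm:physical} with weights \(c_i=\zeta_i^p>0\) and \(\alpha=p/2\): positivity rules out class cancellation, so \(1\le m\le n\) and \(D_{\rm red}=D_{\rm orig}\), and alternatives (ii)--(iii) of \cref{thm:signed-line} then give the \(2m-1\le2n-1\) bound for every \(p>0\), in particular for \(p\geq1\). Your convention checks (distinct points when a charge lies on the line, and no zero at infinity by \eqref{eq:physical-infinity}) are the same bookkeeping the paper does in proving \cref{thm:physical}.
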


For the planar Gabrielov--Novikov--Shapiro (GNS) restriction to the \(x\)-axis, the parameters 
are \(a_i=x_i\) and \(b_i^2=y_i^2\). Thus terms \(i\) and \(j\) define the same restricted 
kernel precisely when \((x_i,y_i^2)=(x_j,y_j^2).\)

\begin{corollary}[Natural reduced GNS formulation]\label{cor:gns}
In the signed setting of GNS Conjecture~1.9, combine equal restricted
kernels and delete zero class sums.  Exact cancellation occurs precisely
when every class sum is zero.  Otherwise, if \(m\) effective classes
remain, there are at most \(2m-1\) regular critical points with the smooth
and singular counting conventions of \cref{thm:signed-line}.  
Thus, outside the exact-cancellation case, the reduced GNS bound 
holds throughout the GNS range \(\alpha\geq\tfrac12\), and in fact for every \(\alpha>0\).
\end{corollary}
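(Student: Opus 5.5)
The statement to prove is \cref{cor:gns}. The plan is to read it as a direct specialization of \cref{thm:signed-line} to the planar GNS data. No new analysis should be needed.

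First, fix the dictionary. In the GNS setting, the signed planar potential \(\sum_i\eta_i\bigl((x-x_i)^2+(y-y_i)^2\bigr)^{-\alpha}\) is restricted to the \(x\)-axis. Setting \(t=x\), \(a_i=x_i\) and \(b_i=|y_i|\geq0\) gives exactly the form \eqref{eq:intro-signed}, with \(b_i^2=y_i^2\). The regular points of the restriction are the points \(t\neq x_i\) for those \(i\) with \(y_i=0\), and this is precisely \(D_{\rm orig}\). As remarked just before the corollary, two terms define the same restricted kernel exactly when \((x_i,y_i^2)=(x_j,y_j^2)\), which is the relation \eqref{eq:kernel-equivalence}. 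Hence combining equal kernels and deleting zero class sums is the reduction of \cref{sec:main-results}, and the number of surviving classes is the \(m\) of that section. One small point needs care: distinct planar charges \((x_i,\pm y_i)\) collapse to one class. I will rely on \cref{lem:physical-proportionality} to justify that equal kernels and equal parameter pairs \((a,b^2)\) are the same notion.

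Second, the exact-cancellation assertion is \eqref{eq:exact-cancellation-equivalences} (equivalently \cref{lem:exact-cancellation}) read through this dictionary. The GNS restriction vanishes identically on its regular domain if and only if every class sum is zero, that is, if and only if \(m=0\).

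Third, suppose \(m\geq1\). The GNS hypothesis \(\alpha\geq\tfrac12\) is only used to check that we are inside the range \(\alpha>0\) of \cref{thm:signed-line}. Alternative (ii) of that theorem gives the bound \(2m-1\) with multiplicity when all effective heights \(|y_j|\) are positive. Alternative (iii) gives the global distinct-point bound when some effective source lies on the axis. The closing sentence of \cref{thm:signed-line} transfers both bounds from \(D_{\rm red}\) to \(D_{\rm orig}\), which is the GNS regular set. Since \cref{thm:signed-line} holds for all \(\alpha>0\), the same conclusion holds beyond the GNS range.

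No step here is a genuine obstacle. The only place requiring attention is the bookkeeping in the first step: the identification of \(b_i\) with \(|y_i|\), and the check that GNS regular points coincide with \(D_{\rm orig}\) rather than \(D_{\rm red}\). Getting this right is what makes the counting conventions, including the filled canceled punctures, match those of \cref{thm:signed-line}.
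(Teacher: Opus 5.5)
Your proposal is correct and follows the paper's route: the paper gives no separate proof and reads the corollary off \cref{thm:signed-line}, using the dictionary \(a_i=x_i\), \(b_i^2=y_i^2\) stated just before it, exactly as you do. Exact cancellation comes from \eqref{eq:exact-cancellation-equivalences}, and the \(2m-1\) bounds come from alternatives (ii)--(iii), transferred to \(D_{\rm orig}\); your appeal to \cref{lem:physical-proportionality} for the kernel identification is harmless, though slightly more than needed, since equal kernels already give equal quadratics after taking \(-1/\alpha\) powers.
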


The word \emph{line} is essential in both corollaries.  Neither statement
asserts the unrestricted multidimensional Maxwell conjecture.

\section{Examples and positive optimality}
\label{sec:examples}
\label{sec:sharpness}

The following examples display cancellation, mixed signs, a zero total
coefficient, and a source on the line before we establish the positive
extremal construction.

\begin{example}[Reflected exact cancellation]\label{ex:reflection}
Restrict to the \(x\)-axis and take sites \((0,1)\) and \((0,-1)\), with
coefficients \(1\) and \(-1\).  Both sites induce the pair
\((a,b^2)=(0,1)\), so their class sum is zero and
\[
  F_\alpha(t)=\frac1{(t^2+1)^\alpha}
              -\frac1{(t^2+1)^\alpha}=0
\]
for every \(\alpha>0\).  
Here \(m=0\), and every ordinary point is
critical.  This is the literal exception that necessitates the reduced GNS
formulation.
\end{example}

\begin{example}[A nonzero mixed-sign restriction]\label{ex:mixed-sign}
At \(\alpha=1\), consider
\begin{equation}\label{eq:mixed-sign-example}
  F(t)=\frac{2}{t^2+1}-\frac{1}{(t-2)^2+1}.
\end{equation}
Its two kernels are distinct, so \(m=2\), and
\begin{equation}\label{eq:mixed-sign-derivative}
  F'(t)=
  \frac{-2P(t)}{(t^2+1)^2(t^2-4t+5)^2},
  \qquad
  P(t)=t^5-14t^4+50t^3-76t^2+49t+2.
\end{equation}
The exact values are
\[
  \bigl(P(-1),P(0),P(2),P(3),P(9),P(10)\bigr)
  =(-188,2,4,-76,-2068,2892).
\]
These sign changes give a zero of \(F'\) in each of
\((-1,0)\), \((2,3)\), and \((9,10)\).  The smooth paired bound is three
counted with multiplicity.  Hence these are exactly the three real critical
points, one in each displayed bracket, and all are simple.
\end{example}

\begin{example}[Zero total coefficient and an infinity zero]
\label{ex:zero-total}
Still at \(\alpha=1\), take
\begin{equation}\label{eq:zero-total-example}
  F(t)=\frac{1}{t^2+1}-\frac{1}{(t-2)^2+1}.
\end{equation}
The total coefficient is zero, but the two kernels are distinct and
\(m=2\).  Direct calculation gives
\begin{equation}\label{eq:zero-total-derivative}
  F'(t)=
  \frac{4\bigl(3(t-1)^4-4\bigr)}
  {(t^2+1)^2(t^2-4t+5)^2}.
\end{equation}
Thus the two finite real critical points are
\begin{equation}\label{eq:zero-total-roots}
  t=1\pm\left(\frac43\right)^{1/4},
\end{equation}
and both are simple.  For the paired section \(E=-F'/2\), direct expansion
in the infinity coordinate \(s=1/t\) gives
\(E_{\infty}(s)=-6s+O(s^2)\).  It therefore has a simple infinity zero:
the zeroth numerator sum vanishes, while the next coefficient does not.
Thus the two finite roots together with the infinity zero have total
projective multiplicity three, exactly the \(2m-1\) budget.
\end{example}

\begin{example}[One effective source on the line]\label{ex:on-line-one}
For \(F_\alpha(t)=|t|^{-2\alpha}\), the reduced domain is
\(\R\setminus\{0\}\), \(m=1\), and
\begin{equation}\label{eq:on-line-one-derivative}
  F_\alpha'(t)=-2\alpha\sgn(t)|t|^{-2\alpha-1}.
\end{equation}
There is no critical point in either component.  The pole \(t=0\) is
outside the domain and is not critical; the global bound is \(2m-1=1\).
\end{example}

The next result concerns positive configurations only.  It proves that 
the \(2n-1\) bound in terms of the original number \(n\) of charges cannot be lowered uniformly; it
does not assert a separate signed optimality theorem.

\begin{theorem}[Positive universal sharpness]\label{thm:sharpness}
For every \(p>0\) and \(n\geq1\), take the line to be the \(x\)-axis and
place unit positive charges at
\begin{equation}\label{eq:sharp-sites}
  A_i=\left(i,\frac1{4n},0\right),
  \qquad i=1,\ldots,n.
\end{equation}
The restriction of \(V_p\) to the line has exactly \(2n-1\) distinct
critical points, all simple.  Consequently
\begin{equation}\label{eq:positive-extremal-count}
  \max\#\Eq(V_p|_\Lambda)=2n-1
  \qquad(p>0),
\end{equation}
where the maximum ranges over positive \(n\)-charge configurations and
straight lines.
Here, \(\operatorname{Eq}(V_p|_\Lambda)\) denotes the set of critical points of \(V_p|_\Lambda\) in its regular domain.
\end{theorem}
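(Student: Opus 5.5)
The upper bound is already available: \cref{thm:physical} gives at most \(2n-1\) critical points for every positive \(n\)-charge configuration and every line. It therefore suffices to show that the configuration \eqref{eq:sharp-sites} has at least \(2n-1\) distinct critical points. Here all heights are \(b=\varepsilon:=1/(4n)>0\), so the restriction is smooth. By \cref{thm:signed-line}(ii), the total multiplicity is at most \(2n-1\). So if we exhibit \(2n-1\) sign changes of
\[
E(t)=\sum_{i=1}^n\frac{t-i}{((t-i)^2+\varepsilon^2)^\beta},\qquad \beta=1+\tfrac p2,
\]
then every zero is simple and there are exactly \(2n-1\) of them. The plan is to find \(2n\) test points \(t_1<\cdots<t_{2n}\) at which \(E\) alternates in sign.

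Take two points around each site: \(t=i-\varepsilon\) and \(t=i+\varepsilon\) for \(i=1,\dots,n\). At \(t=i\pm\varepsilon\) the self term of charge \(i\) equals \(\pm\varepsilon(2\varepsilon^2)^{-\beta}=\pm2^{-\beta}\varepsilon^{1-2\beta}\). This is huge, of order \(\varepsilon^{-1-p}\). The other terms satisfy \(|t-j|\ge 1-\varepsilon\ge 3/4\), so each is bounded in absolute value by \(|t-j|^{1-2\beta}\le|t-j|^{-1-p}\). Their total is at most \(2\sum_{k\ge1}(k-\varepsilon)^{-1-p}\). For \(k\ge2\) we have \(k-\varepsilon\ge k-1/4\), and for \(k=1\) the term is at most \((3/4)^{-1-p}\). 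Hence the total is at most \(2\bigl((4/3)^{1+p}+\zeta(1+p)\bigr)\) or a similar explicit constant.

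We then need
\[
2^{-\beta}\varepsilon^{-1-p}>2\bigl((4/3)^{1+p}+\zeta(1+p)\bigr),
\qquad\text{i.e.}\qquad
2^{-1-p/2}(4n)^{1+p}>\text{const}(p).
\]
This is the main obstacle. For fixed moderate \(p\) it holds comfortably, but as \(p\to0\) the constant \(\zeta(1+p)\sim1/p\) blows up, while the left side stays near \(4n/2\). For \(n=1\) there is nothing to prove, since one zero at \(t=1\) suffices, as \(2n-1=1\). For \(n\ge2\) and small \(p\) the crude tail bound fails, so the estimate has to be sharpened. Since there are only \(n\) charges, the tail sum is really finite: it is at most \(2\sum_{k=1}^{n-1}(k-\varepsilon)^{-1-p}\le 2\bigl((4/3)^{1+p}+H_{n}\bigr)\), where \(H_n\) is the harmonic number, uniformly in \(p>0\). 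The left side is at least \(2^{-1-p/2}(4n)^{1+p}\ge 2n(4n)^{p}2^{-p/2}\ge 2n\cdot(2\sqrt2\, n)^p\ge 2n\). Comparing \(2n\) with \(2(4/3)^{1+p}+2H_n\) is fine for bounded \(p\) and large \(n\), but not automatically for small \(n\). I would therefore first try the finer cancellation. At \(t=i\pm\varepsilon\), the neighbours \(j=i-k\) and \(j=i+k\) contribute terms of opposite sign that nearly cancel, leaving \(O(\varepsilon\, k^{-2-p})\) per pair plus one unpaired one-sided tail. This makes the tail at most \(O(1)\), about \(2\bigl(1+(3/4)^{-1-p}\bigr)\), against a main term of order \(n^{1+p}\). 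If the constants still fail for tiny cases, I would adjust the test points to \(i\pm c\varepsilon\) or check those cases explicitly.

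Granting the domination, the sign of \(E\) at \(i-\varepsilon\) is negative and at \(i+\varepsilon\) positive. This gives a root in \((i-\varepsilon,i+\varepsilon)\) for each \(i\), which is \(n\) roots. Between \(i+\varepsilon\) (positive) and \((i+1)-\varepsilon\) (negative) we get another root for each \(i\le n-1\), which is \(n-1\) more. That totals \(2n-1\) distinct zeros in disjoint intervals. Combined with the multiplicity bound, they are exactly all the critical points and are simple, and the maximum in \eqref{eq:positive-extremal-count} equals \(2n-1\).
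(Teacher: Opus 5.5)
Your route is the paper's route. You use the test points \(i\pm\varepsilon\), self-term domination to get \(E(i-\varepsilon)<0<E(i+\varepsilon)\), and the intermediate value theorem for \(2n-1\) zeros. The paired Haar multiplicity bound then forces exactness and simplicity, and \cref{thm:physical} gives the maximum.

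The gap is that the one substantive inequality, domination of the self term at the test points, is never proved. You conclude with ``granting the domination,'' and the explicit bounds you write down do not close it:
\begin{itemize}
\item The zeta bound diverges as \(p\to0\).
\item The harmonic bound \(2\bigl((4/3)^{1+p}+H_n\bigr)\) is about \(5.7\) when \(n=2\) and \(p\) is small.
\item Your cancellation estimate of roughly \(2\bigl(1+(4/3)^{1+p}\bigr)\) is about \(4.7\) in the same regime.
\end{itemize}
Both of the last two exceed the self term \(2^{-1-p/2}8^{1+p}\approx4\). The fallback of moving the test points or checking small cases by hand is not carried out. As written, the proof is incomplete at its only nontrivial step.

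The difficulty comes from overcounting. Summing \(k=1,\ldots,n-1\) on both sides counts \(2(n-1)\) neighbours, and the harmonic and zeta bounds are unnecessary. At \(t=i\pm\varepsilon\) there are exactly \(n-1\) other charges, each with \(|t-j|\geq1-\varepsilon\geq\tfrac34\). Each therefore contributes less than \(|t-j|^{1-2\beta}\leq(4/3)^{1+p}\), and the tail is below \((n-1)(4/3)^{1+p}\). Then
\[
\frac{2^{-1-p/2}(4n)^{1+p}}{(n-1)(4/3)^{1+p}}
=\frac{3n}{2(n-1)}\Bigl(\frac{3n}{\sqrt2}\Bigr)^{p}>\frac32
\qquad(n\geq2,\ p>0).
\]
This holds uniformly in \(p>0\), so small \(p\) causes no trouble and no neighbour cancellation is needed. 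This is exactly the paper's estimate. With it inserted, the rest of your argument goes through as written. That includes the case \(n=1\), where the multiplicity bound \(2n-1=1\) already makes the zero at \(t=1\) simple.
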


\begin{proof}
Put
\begin{equation}\label{eq:sharp-family}
  a_i=i,
  \qquad b=\frac1{4n},
  \qquad c_i=1,
  \qquad i=1,\ldots,n,
\end{equation}
and define
\begin{equation}\label{eq:sharp-field}
  E_{p,n}(t)=\sum_{j=1}^n
  \frac{t-j}{\bigl((t-j)^2+b^2\bigr)^{1+p/2}}.
\end{equation}
Along the \(x\)-axis, \(V_p'=-pE_{p,n}\).

Assume first that \(n\geq2\).  At \(t=i\pm b\), the self term has
magnitude
\begin{equation}\label{eq:self-magnitude}
  S_{p,n}=2^{-1-p/2}(4n)^{p+1}.
\end{equation}
For \(j\neq i\),
\[
  |t-j|\geq1-b\geq\frac34,
\]
and, with \(\beta=1+p/2\),
\begin{equation}\label{eq:tail-single}
  \frac{|t-j|}{\bigl((t-j)^2+b^2\bigr)^\beta}
  <|t-j|^{1-2\beta}
  \leq\left(\frac43\right)^{p+1}.
\end{equation}
The self term dominates the sum of the remaining terms because
\begin{equation}\label{eq:sharp-ratio}
  \frac{S_{p,n}}{(n-1)(4/3)^{p+1}}
  =\frac{3n}{2(n-1)}
    \left(\frac{3n}{\sqrt2}\right)^p
  >\frac32>1.
\end{equation}
Therefore
\begin{equation}\label{eq:sharp-signs}
  E_{p,n}(i-b)<0<E_{p,n}(i+b)
  \qquad(i=1,\ldots,n).
\end{equation}
Since \(2b=1/(2n)<1\), the \(2n\) test points are ordered as
\[
  1-b<1+b<2-b<2+b<\cdots<n-b<n+b.
\]
Their signs alternate.  The intermediate value theorem supplies a zero in
each of the \(2n-1\) disjoint intervals between consecutive test points.

The smooth paired Haar theorem bounds the sum of the multiplicities of all
zeros by \(2n-1\).  We have already found that many distinct zeros, each of
multiplicity at least one.  Hence there are exactly \(2n-1\) zeros, and
each is simple.  When \(n=1\), the unique zero is \(t=1\), and
\[
  E_{p,1}'(1)=b^{-2-p}>0,
\]
so it is simple as well.
\end{proof}

The results concern \(p>0\). At \(p=0\), \(V_0\) is constant on its regular domain, and no assertion is made about the limit \(p\to\infty\).

\subsection*{Disclosure of AI assistance}
The main idea of this work emerged through exploratory conversations with ChatGPT using GPT-5.6 Sol Pro. 
During the subsequent development, the author used the system as an aid in developing the arguments and 
improving their presentation. The author independently reconstructed, checked, and 
revised the mathematical details and assumes full responsibility for the mathematical correctness and final manuscript.

\renewcommand{\bibliofont}{\scriptsize}\bibliographystyle{amsplain}
\bibliography{references}

\end{document}